%% file: main.tex
\documentclass[preprint,times]{elsarticle}
\journal{Systems \& Control Letters}

\usepackage{lineno}

\usepackage[OT1]{fontenc}
\usepackage{preamble}
\input{shortcuts.tex}

\newcommand{\yang}[1]{\textcolor{ForestGreen}{[\textbf{Yang:} #1]}}
\newcommand{\lina}[1]{\textcolor{red}{[\textbf{Lina:} #1]}}
\newcommand{\rev}[1]{#1}  

\begin{document}

  \allowdisplaybreaks[4]
  \setlength{\abovedisplayskip}{1pt}
  \setlength{\abovedisplayshortskip}{1pt}
  \setlength{\belowdisplayskip}{1pt}
  \setlength{\belowdisplayshortskip}{1pt}
  \setlength{\jot}{1pt}  
  \setlength{\floatsep}{1ex}
  \setlength{\textfloatsep}{1ex}


    
  \input{00-frontmatters}
  \input{01-introduction}
  \input{02-settings}
  \input{03-results}
  \input{04-simulations}
  \input{05-backmatters}

  \bibliographystyle{elsarticle-num-names} 
  \bibliography{ref}


  \appendix
  \input{A1-appendix}
  \vfill

\end{document}

%% file: shortcuts.tex
\DeclareMathAlphabet{\bdxscr}{U}{BOONDOX-cal}{m}{n}
\SetMathAlphabet{\bdxscr}{bold}{U}{BOONDOX-cal}{b}{n}
\newcommand{\g}{\bdxscr{g}}

\newcommand{\boldtitle}[1]{\medskip\noindent\textbf{#1}}

\renewcommand{\preceq}{\preccurlyeq}
\renewcommand{\succeq}{\succcurlyeq}

\makeatletter
\newcommand{\oset}[3][0ex]{%
  \mathrel{\mathop{#3}\limits^{%
  \vbox to#1{\kern-2\ex@%
  \hbox{$\scriptstyle#2$}\vss}}%
}}
\makeatother
\newcommand{\todist}{\mathrel{\oset[0.45ex]{\mathrm{d}}{\to}}}

\newcommand{\sddots}{\raisebox{0pt}{\scalebox{.66}{$\ddots$}}}

\newcommand{\zero}{\bm{0}}

\newcommand{\Sg}{\varSigma}
\newcommand{\As}{A_{\star}}
\newcommand{\Ss}{\varSigma_{\star}}
\newcommand{\thetas}{\theta_{\star}}

\newcommand{\hA}{\hat{A}}
\newcommand{\hS}{\hat{\varSigma}}
\newcommand{\htheta}{\hat{\theta}}
\newcommand{\hAols}{\hat{A}_{\mathrm{ols}}}
\newcommand{\hSols}{\hat{\varSigma}_{\mathrm{ols}}}
\newcommand{\hAmle}{\hat{A}_{\mathrm{mle}}}
\newcommand{\hSmle}{\hat{\varSigma}_{\mathrm{mle}}}
\newcommand{\hAsme}{\hat{A}_{\mathrm{sme}}}
\newcommand{\hSsme}{\hat{\varSigma}_{\mathrm{sme}}}

\renewcommand{\P}{\mathbb{P}}

\newcommand{\D}{\mathcal{D}}
\newcommand{\dD}{d^{\mathcal{D}}}
\renewcommand{\F}{\mathcal{F}}
\renewcommand{\H}{\mathcal{H}}
\renewcommand{\S}{\mathcal{S}}

\renewcommand{\L}{\mathcal{L}}
\newcommand{\hL}{\hat{\mathcal{L}}}

\newcommand{\Lmle}{\mathcal{L}_{\mathrm{mle}}}
\newcommand{\hLmle}{\hat{\mathcal{L}}_{\mathrm{mle}}}
\newcommand{\Lsme}{\mathcal{L}_{\mathrm{sme}}}
\newcommand{\hLsme}{\hat{\mathcal{L}}_{\mathrm{sme}}}

\newcommand{\lolambda}{\lambda_{-}}
\newcommand{\hilambda}{\lambda_{+}}

\newcommand{\tw}{\tilde{w}}

\usepackage{environ}

\NewEnviron{equationfit}[1][0.9\linewidth]{%
\begin{equation*}
  \resizebox{#1}{!}{\(
    \BODY
  \)}
\end{equation*}
}
\NewEnviron{alignfit}[1][0.9\linewidth]{%
\begin{equation*}
  \resizebox{#1}{!}{\(
  \begin{aligned}
    \BODY
  \end{aligned}
  \)}
\end{equation*}
}

%% file: 00-frontmatters.tex
\begin{frontmatter}

  \title{Joint Identification of Linear Dynamics and Noise Covariance via Distributional Estimation}
  \author[Harvard]{Yang Hu\corref{corr}}
  \author[Harvard]{Na Li}
  \affiliation[Harvard]{%
    organization={Harvard University},%
    addressline={150 Western Ave},%
    city={Boston},%
    postcode={02134},%
    state={MA},%
    country={USA}%
  }
  \cortext[corr]{%
    Yang Hu (email: \url{yanghu@g.harvard.edu}) is the corresponding author of this paper.%
  }

  
  \begin{abstract}
    In this paper, we propose a novel framework for the joint identification of system dynamics and noise covariance in linear systems, under general noise distributions beyond Gaussian. Specifically, we would like to simultaneously estimate the dynamical matrix $A$ and the noise covariance matrix $\Sg$ using state transition data. The formulation builds upon a novel parameterization of the state-transition distribution, which enables more effective use of distributional “shape” information for improved identification accuracy. We introduce two practical estimators, namely the maximum likelihood estimator (MLE) and the score-matching estimator (SME), to solve the joint dynamics-covariance identification problem, and provide rigorous analysis of their statistical properties and sample complexity. Simulation results show that the proposed estimators outperform the ordinary least squares (OLS) baseline.
  \end{abstract}
  
  
  
  \begin{keyword}
    linear systems \sep system identification \sep noise covariance identification \sep maximum likelihood estimator \sep score matching \sep sample complexity
  \end{keyword}

\end{frontmatter}

%% file: 01-introduction.tex
\section{Introduction}\label{sec:1-introduction}

System identification---the problem of estimating system dynamics using state transition data---has been a long-standing research field with broad applications, ranging from control and robotics to econometrics and neuroscience. In the past, classical theoretical results in this field mainly focused on the consistency (i.e., asymptotic convergence) of estimators, including ordinary least-squares (OLS) estimators \cite{ljung1976consistency} and maximum likelihood estimators (MLE) \cite{ljung1976on}. The more recent advances in learning theory have brought significant progresses in establishing finite-time convergence guarantees for these estimators \cite{simchowitz2018learning, sarkar2019near, oymak2019non, jedra2022finite}. These developments have, in turn, laid foundations for finite-time regret analyses in downstream control synthesis tasks \cite{cohen2019learning, mania2019certainty, dean2020sample, simchowitz2020naive}.

However, these finite-time analyses of linear system identification have primarily focused on estimating the nominal system dynamics (e.g., the dynamic matrix $
A$ for autonomous linear systems), typically under strong simplifying assumptions on the noise model like isotropic Gaussian or sub-Gaussian noise. Such assumptions do facilitate tractable analysis and favorable sample complexity guarantees, though at the risk of over-simplifying real-world systems by ignoring the crucial “shape” information hidden in the noise distribution, such as its tail behavior and higher-order moments. For instance, the success of OLS-based estimators in identifying nominal dynamics is known to implicitly rely on the light-tailed nature of the noise, which may not hold in practice and could lead to performance degradation. 

Beyond the question of whether such noise assumptions are realistic or not, another important yet often overlooked issue is the identification of the noise covariance structure itself. Even under the classical assumption of light-tailed Gaussian noise, estimating the noise covariance remains a meaningful and practically relevant task. Such information plays a pivotal role in a variety of downstream applications, including distributionally robust control \cite{zymler2013distributionally, van2015distributionally}, risk-sensitive control \cite{kishida2023risk, hu2024risk}, and adaptive Kalman filtering \cite{huang2017novel, limaverde2022adaptive}, where knowledge of the noise covariance is typically assumed a priori. Therefore, there is a clear need to move beyond merely identifying the nominal dynamics and to develop efficient methods for jointly identifying both the system dynamics and the underlying noise structure.



These observations have recently spurred growing interest in noise modeling and identification.  Following the pioneering work \cite{mehra1970identification}, multiple covariance identification methods have been designed in the past decade, including auto-covariance least-squares (ALS) methods \cite{odelson2006new, ge2017noise} and MLE methods \cite{bavdekar2011identification, zagrobelny2015identifying}. However, these methods only focus on the noise covariance identification and generally assume known system dynamics. In brief, while established methods exist for estimating system dynamics or noise covariance separately, principled approaches for the joint identification of both are still lacking.  

One intuitive approach to  bridge the above gap between system identification and noise covariance identification is to resort to a two-stage scheme---first identifying the nominal dynamics, and then recovering the noise covariance using the identified dynamics. However, as discussed later in this paper, such a tandem approach is technically equivalent to assuming Gaussian noise, thereby failing to capture the richer “shape” information presented in more general noise models. These challenges motivate the following research question:

\vspace{-2mm}
\begin{center}
  \textit{Is it possible to perform joint identification of dynamics\\and noise covariance for generic noise distribution?}
\end{center}
\vspace{-2mm}

\noindent Here by ``joint identification'' we explicitly mean to rule out the tandem approach stated above. Instead, we would like to reparameterize the problem to encapsulate both the dynamics and the covariance into a uniform set of parameters, hence enabling algorithms that effectively and efficiently estimate the parameters for the joint identification. As we will see later, an appropriate generic parameterization scheme is critical to this task.

\boldtitle{Contributions.} In this paper, we propose to model noise distribution as a unit-covariance base distribution scaled by the covariance matrix, such that the state transition distribution belongs to a density family parameterized by the dynamical matrix $A$ and the scale matrix $\Sg$. Such reparameterization is favored in that it is general enough to include a wide range of known noise distributions, and that it directly encodes $A$ and $\Sg$ in the parameters to reformulate the joint identification task as a parameter estimation problem. Based on such reparameterization, we propose two novel estimators, i.e. the maximum likelihood estimator (MLE) and the score-matching estimator (SME), that outperform the naive ordinary least-squares (OLS) estimator via theoretical arguments and empirical evidence.

The contribution of this paper is two-fold.
\begin{enumerate}
  \item We establish the connection between state transition in linear systems subject to generic stochastic noise and $\phi$-density families, and highlight an important sub-collection of elliptical families that greatly facilitates computation. As far as we know, this is the first paper to study the problem of joint identification of linear dynamics and noise covariance (the ``\emph{joint identification}'' problem) and propose an appropriate model to reformulate it as a parameter estimation problem.
  
  \item We propose two novel estimators, namely MLE and SME, to solve the joint identification problem, along with a characterization of their key properties and a preliminary sample complexity analysis. The performance of the estimators is compared against the naive OLS baseline and justified by empirical evaluation results.
\end{enumerate}

The rest of this paper is organized as follows. In \Cref{sec:2-settings} we introduce the system setting and problem formulation, as well as preliminaries on $\phi$-density families. In \Cref{sec:3-algorithm} we define the proposed estimators and characterize their theoretical properties, and also provide a sample complexity analysis. In \Cref{sec:4-simulations} we include simulation results that further justify the performance of the proposed estimators.

%% file: 02-settings.tex
\section{Problem Formulation and Preliminaries}\label{sec:2-settings}

In this section, we first formulate our system setting and the joint identification problem, and then provide some necessary preliminaries on the candidate distribution families.

\subsection{Problem Formulation}

\boldtitle{System Setting.} Consider a linear time-invariant (LTI) system
\begin{equation}\label{eq:system_dynamics}
  x' = \As x + \Ss^{1/2} w,
\end{equation}
where $x,x' \in \R^d$ denote the current and next \emph{state} of the system, respectively; $\As \in \R^{d \times d}$ is the \emph{dynamical} matrix, and $\Ss \in \S_{\succ \zero}^{d}$ is the \emph{scale} matrix that regulates the covariance of the effective noise input $\Ss^{1/2} w$, both of which are unknown and to be estimated; $w \in \R^d$ is an i.i.d. \emph{unit noise} with zero mean ($\E{w} = \zero$) and unit covariance ($\Var{w} = I$). We emphasize that, as discussed in the introduction, we would like to handle \emph{general} noise distributions, rather than being confined to any specific choices like Gaussian noise. Specifically, we assume a known \emph{base density} $\phi(\cdot)$ of the unit noise $w$, which is allowed to take \emph{any} form, as long as it satisfies the following regularity assumptions.

\begin{assumption}[Base density]\label{assum:diff_base_density}
  The base density $\phi$ is positive and twice continuously differentiable, and induces a finite first-order moment. Further, denote the gradient and Hessian of $\log \phi(w)$ by $\g(w)$ and $\H(w)$, respectively; i.e.,
  \begin{align*}
    \g(w) &:= \nabla_{w} \log \phi(w) = \frac{\nabla_{w} \phi(w)}{\phi(w)},\\
    \H(w) &:= \nabla^2_w \log \phi(w) = \nabla_w \g(w).
  \end{align*}
\end{assumption}

As will be seen later, the parameter estimators generally involve log-densities and their derivatives, and the assumption is hence made to facilitate the design of such estimators.

\boldtitle{The Joint Identification Problem.} On a high level, the goal of the proposed \emph{joint identification} problem is to identify the dynamical and scale matrices $(A, \Sg)$, using state transition data $\D = \set{(x_i, x'_i) \mid i \in [N]}$. Note that, in general, we don't make assumptions on the dataset $\D$, which is allowed to contain both single- and multiple-trajectory data.

A natural and concrete way to solve this joint identification problem is to convert it into a parameter estimation problem, which leads to various well-known estimators including the maximum likelihood estimator (MLE) and the score-matching estimator (SME) proposed below. Specifically, for any loss function $\ell(x, x'; A, \Sg)$ penalizing ``less likely'' transitions from $x$ to $x'$ under parameters $(A, \Sg)$, the corresponding estimator is
\begin{equation}
  (\hA, \hS) := \arg\min_{A,\Sg}\; \sum_{i=1}^{N} \ell(x_i, x'_i; A, \Sg).
\end{equation}
The high-level goal of the remaining paper is to construct proper losses $\ell$ and characterize the corresponding estimators in terms of statistical properties including sample complexity.

\subsection{Preliminaries}

In this section, we introduce some preliminaries on how to properly reparameterize the noise distribution. We first show that the transition distribution implied by \eqref{eq:system_dynamics} belongs to a family determined by the base density $\phi$, and then introduce \emph{elliptical} family as a computation-friendly special case. For the sake of presentation, we only focus on distributions with densities.

\boldtitle{Reparameterizing as $\bm{\phi}$-Density Family.} We point out again that various methods have been proposed in literature to perform either \emph{dynamics identification} (estimate $A$ given known $\Ss$) \cite{simchowitz2018learning, sarkar2019near, oymak2019non, jedra2022finite} or \emph{covariance identification} (estimate $\Sg$ given known $\As$) \cite{odelson2006new, ge2017noise, bavdekar2011identification, zagrobelny2015identifying} separately using state transition data, but not for \emph{joint identification} that simultaneously identifies $(A, \Sg)$ (both assumed to be unknown). A major issue that obstructs the application of data-driven methods for joint identification is that the roles played by the $A$ and $\Sg$ matrices in state transitions are quite distinct. Indeed, conditioned on any fixed current state $x$, the dynamics \eqref{eq:system_dynamics} guarantees that the conditional expectation and variance of the next state $x'$ are, respectively,
\begin{align*}
  \Es{x' \mid x} &= \Es{\As x \mid x} + \Es{\Ss^{1/2} w \mid x} = \As x,\\
  \Vars{x' \mid x} &= \Vars{\As x \mid x} + \Vars{\Ss^{1/2} w \mid x} = \Ss.
\end{align*}
Briefly put, the expectation only depends on $\As$, while the variance is fully determined by $\Ss$, which indicates that they have to be handled independently without a unified reparameterization of the problem that treats the parameters $(A, \Sg)$ equally.

Fortunately, the introduction of a fixed base density $\phi(\cdot)$ provides sufficient additional structure that enables such reparameterization. Note that, as dictated by the linear dynamics \eqref{eq:system_dynamics}, given a known base density $\phi(w)$ of the unit-covariance noise $w$, the state transition distribution $\P(x' | x)$ must also be associated with a conditional density $f_{A, \Sg}(x' | x)$ given by
\begin{equation}
  f_{A,\Sg}(x'|x) = \det(\Sg)^{-1/2} \phi\prn[\Big]{ \Sg^{-1/2} (x'-Ax) },
\end{equation}
which is parameterized by, and only by, $(A, \Sg)$. We say that $\P(x' | x)$ belongs to a \emph{$\phi$-density family}%
\footnote{This is closely related to the \textit{location-scale family} in probability theory literature, which refers to a collection of distributions/densities parameterized by location and scale parameters. In our case, $Ax$ and $\Sg$ can be viewed as the location and the scale, respectively. An expository discussion can be found in \cite{shao2008mathematical}.}. %
We will denote the the gradient and Hessian of $\log \phi(w)$ as $\g(w)$ and $\H(w)$, respectively, 
  \begin{equation}\label{eq:log_density_gradient-hessian}
    \g(w) = \nabla_{w} \log \phi(w); \quad 
    \H(w) = \nabla_{w}^2 \log \phi(w) 
  \end{equation}

For technical clarity, we further assume that the hypothesis space $\varTheta$ for $(A, \Sg)$ satisfies the following regularity assumption.

\begin{assumption}[Hypothesis space]\label{assum:regular_Theta}
  Let $\varTheta = \varTheta_A \times \varTheta_{\Sg}$ be the product of hypothesis spaces for $A$ and $\Sg$, where $\varTheta_A \subseteq \R^{d \times d}$ and $\varTheta_{\Sg} \subseteq \S^d_{\succ \zero}$. We assume that $\thetas = (\As, \Ss)$ lies in the interior of $\varTheta$, and that $f_{A_1, \Sg_1}(x' | x) \equiv f_{A_2, \Sg_2}(x' | x)$, $\forall x, x' \in \R^d$ if and only if $(A_1, \Sg_1) = (A_2, \Sg_2) \in \varTheta$.
\end{assumption}

The benefit of fixing a base density and considering the corresponding $\phi$-density family is that it provides a unified parameterization of transition distributions using $(A, \Sg)$ (rather than the distribution-specific parameters), directly exposing the parameters to learn and thus simplifying the design of estimators.

\boldtitle{The Elliptical Family.} Despite the generality of $\phi$-density families, it could be analytically and computationally hard to handle the log-densities and their derivatives in closed forms. To further facilitate the design and analysis of algorithms, we consider an important special type of $\phi$-density families, where their base densities are assumed to be \emph{elliptical} (i.e., the iso-density contours are ellipsoids; see \cite{cambanis1981theory} for details).

\begin{definition}[Elliptical density]
  A density $\phi$ is called \emph{elliptical} if there exists a function $\psi: \R \to \R$ such that $\phi(w) = \psi(w^{\top} w)$.
\end{definition}

Elliptical base densities are favorable since the derivatives of their log-densities can be written in simple closed forms.

\begin{lemma}\label{thm:lemma-diff_elliptical}
  For an elliptical base density $\phi(w) = \psi(w^{\top} w)$, the gradient $\g(w)$ and Hessian $\H(w)$ of $\log \phi(w)$ are given by
  \begin{subequations}\label{eq:elliptical-log_density_gradient}
  \begin{align}
    \g(w) &= \rho(w^{\top} w) w, \\
    \H(w) &= \rho(w^{\top} w) I + \tau(w^{\top} w) ww^{\top},
  \end{align}
  \end{subequations}
  where the functions $\rho: \R \to \R$ and $\tau: \R \to \R$ are specified by
  \begin{align*}
    \rho(z) &:= 2 \frac{\diff}{\diff z}\brak[\big]{ \log \psi(z) } = \frac{2\psi'(z)}{\psi(z)},\\
    \tau(z) &:= 4 \frac{\diff^2}{\diff z^2}\brak[\big]{ \log \psi(z) } = \frac{4 \prn[\big]{ \psi''(z) \psi(z) - \psi'(z)^2 }}{\psi^2(z)}.
  \end{align*}
\end{lemma}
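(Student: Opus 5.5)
This is a direct chain-rule computation. We write $\log\phi(w) = h(z)$ with $h := \log\psi$ and $z = w^{\top}w$. For this to make sense we need $\psi > 0$ and $\psi$ twice differentiable on $[0,\infty)$, the range of $z$.

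The positivity of $\psi$ comes from \Cref{assum:diff_base_density}, since $\phi(w) = \psi(w^{\top}w)$ and $\phi>0$. For differentiability we evaluate $\phi$ along a ray: take $w = \sqrt{z}\,e_1$ for $z>0$. This gives $\psi(z) = \phi(\sqrt{z}\,e_1)$, which is $C^2$ for $z>0$ as a composition of $C^2$ maps. At $z=0$ the statement should be read with one-sided derivatives, or equivalently we assume, as the lemma implicitly does, that $\psi$ is twice differentiable where it is evaluated.

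\textbf{Gradient.} Since $\nabla_w z = 2w$, the chain rule gives
$\g(w) = h'(z)\cdot 2w = \frac{2\psi'(z)}{\psi(z)}\,w = \rho(w^{\top}w)\,w$.

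\textbf{Hessian.} We differentiate $\g(w) = \rho(z)w$ once more, using the product rule on the matrix-valued Jacobian:
\begin{equation*}
  \nabla_w \g(w) = \rho(z)\, I + w\,(\nabla_w \rho(z))^{\top} = \rho(z)\, I + \rho'(z)\, 2\, ww^{\top}.
\end{equation*}
Since $\rho(z) = 2h'(z)$, we have $2\rho'(z) = 4h''(z)$. Expanding $h'' = (\psi''\psi - \psi'^2)/\psi^2$ gives $2\rho'(z) = \tau(z)$. This yields $\H(w) = \rho(w^{\top}w)I + \tau(w^{\top}w)ww^{\top}$.

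The only delicate point is the regularity of $\psi$ at $z=0$. Everything else is routine bookkeeping, namely keeping the factor $2$ from $\nabla_w z$ and the factor $2$ in $\rho$ consistent so that they combine into the $4$ in $\tau$.
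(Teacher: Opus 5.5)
Your proposal is correct and is the same direct chain-rule computation as the paper's. The paper plugs $f=\log\psi$ into the packaged identities \eqref{eq:matrix_diff-xTx} and \eqref{eq:matrix_diff-xTx_2}, whereas you rederive the second identity with the product rule applied to $\rho(w^{\top}w)\,w$. Your regularity discussion ($\psi>0$ from \Cref{assum:diff_base_density}, $\psi$ of class $C^2$ on $(0,\infty)$ via $\psi(z)=\phi(\sqrt{z}\,e_1)$, and the one-sided issue at $z=0$, where only $\rho(0)$ actually enters $\H(0)$) is extra care that the paper omits, not a necessary change.
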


\begin{proof}
  By direct calculation. See \ref{sec:apdx-elliptical_lemma} for details.
\end{proof}

We would like to highlight that a wide range of named distributions fall into the category of elliptical distributions. As a sanity check, we first show that the Gaussian family is elliptical.

\begin{example}[Multivariate Gaussian family]
  The log-density of the unit $d$-dimensional Gaussian distribution $\mathcal{N}(\zero, I_d)$ is
  \begin{equation*}
    \log \phi_{\mathcal{N}}(w) = -\frac{1}{2} w^{\top} w - \frac{d}{2} \log (2\pi),
  \end{equation*}
  which is a function of $w^{\top} w$ and is hence elliptical. It can be verified that the derivatives of $\log \phi_{\mathcal{N}}(w)$ are specified by \eqref{eq:elliptical-log_density_gradient}, with $\rho_{\mathcal{N}}$ and $\tau_{\mathcal{N}}$ given by
  \begin{equation*}
    \rho_{\mathcal{N}}(z) = -1,\qquad
    \tau_{\mathcal{N}}(z) = 0.
  \end{equation*}
\end{example}

We proceed to introduce another elliptical density that will be extensively used in simulations.

\begin{example}[Multivariate Student-t family]\label{eg:student-t}
  The log-density of the $d$-dimensional unit Student-t distribution $\mathsf{t}_{d,\nu}\prn[\big]{ \zero, \frac{\nu-2}{\nu} I_d }$ with degree-of-freedom parameter $\nu>2$ is given by
  \begin{equation*}
    \log \phi_{\mathsf{t}}(w) = - \frac{\nu+d}{2} \log\prn*{ 1 + \frac{1}{\nu-2} w^{\top} w },
  \end{equation*}
  which is a function of $w^{\top} w$ and is hence elliptical. It can be verified that the derivatives of $\log \phi_{\mathsf{t}}(w)$ are specified by \eqref{eq:elliptical-log_density_gradient}, with $\rho_{\mathsf{t}}$ and $\tau_{\mathsf{t}}$ given by
  \begin{equation*}
    \rho_{\mathsf{t}}(z) = -\frac{\nu+d}{(\nu-2) + z},\qquad
    \tau_{\mathsf{t}}(z) = \frac{2(\nu+d)}{ \prn{(\nu-2) + z}^2 }.
  \end{equation*}
\end{example}

%% file: 03-results.tex
\section{Joint Identification}\label{sec:3-algorithm}

In this section, we provide our solutions to the joint identification problem formulated above. We start by a motivating section that illustrates the issue with naive ordinary least-squares (OLS) estimators. Then we proceed to introduce two well-known estimators that solve the problem, namely the maximum likelihood estimator (MLE) and the score-matching estimator (SME), along with their characterizations and convergence properties in our setting. Finally, we present a sample complexity guarantee for the proposed estimators.

\subsection{The Naive OLS estimator and its limitations}

At first glance, one may naturally wonder how this problem could potentially be different from the standard system identification problem that only identifies the $A$ matrix (see, e.g., \cite{sarkar2019near}). Indeed, it is intuitive to come up with the following estimator $(\hAols, \hSols)$ that first identifies $A$ by solving a least squares problem, i.e.,
\begin{equation}\label{eq:estimator-OLS-A_optim}
  \hAols = \arg\min_{A} \sum_{i=1}^{N} \norm{x'_i - A x_i}_2^2,
\end{equation}
and then estimates $\Sg$ by the sample covariance calculated using $\hAols$. In closed-form formula, the \textit{naive OLS estimator} is
\begin{equation}\label{eq:estimator-OLS}
  \left\lbrace \begin{array}{@{}l}
    \hAols = \prn*{\sum_{i=1}^{N} x'_i x_i^{\top}} \prn*{\sum_{i=1}^{N} x_i x_i^{\top}}^{-1}, \\
    \hSols = \frac{1}{N} \sum_{i=1}^{N} \prn[\big]{ x'_i - \hAols x_i } \prn[\big]{ x'_i - \hAols x_i }^{\top}.
  \end{array} \right.
\end{equation}
However, a major concern of the above two-step procedure is that it completely ignores the information encoded in the $\phi$-density family of candidate distributions. To briefly illustrate this point, we consider the 1-dimensional special case%
\footnote{Note that both $A$ and $\Sg$ are scalars in this 1-dimensional case.} %
where the transition dataset $\D = \set{(x_i, x'_i)}$ is collected with current state fixed to $x_i \equiv 1$, such that the naive OLS estimators become
\begin{equation}\label{eq:estimator-OLS-1d_example}
  \left\lbrace \begin{array}{@{}l}
    \hAols = \frac{1}{N} \sum_{i=1}^{N} x'_i, \\
    \hSols = \frac{1}{N} \sum_{i=1}^{N} \prn[\big]{ x'_i - \hAols }^2.
  \end{array} \right.
\end{equation}
In other words, in this case $(A, \Sg)$ coincides with the location-scale parameter of the distribution of $x'$, while its naive OLS estimator becomes the sample mean and variance. However, it is known that, in the identification problem of the best location parameter with any given scale parameter, the maximum likelihood estimator (MLE) coincides with the sample mean if and only if the candidate family is Gaussian \cite{teicher1961maximum, marshall1993maximum}, which agrees with a similar result established later of this paper, where MLE only coincides with the OLS estimator when the base density is Gaussian (see \Cref{thm:equiv_estimators_Gaussian}). This result implies that the naive OLS estimator given in \eqref{eq:estimator-OLS-1d_example} is not necessarily the most probable guess we can make given the candidate family of distributions. An intuitive reason behind such discrepancy is that the OLS formulation fails to take into consideration the ``shape'' of the distribution for identification, and thus its performance may deteriorate when the family is vastly different from Gaussian. We emphasize that such observation motivates the introduction of alternative estimators that can better leverage the ``shape'' information of distributions encapsulated in the candidate family, as shown in the following sections.

\boldtitle{The Proposed Estimators.} For a brief overlook, we summarize the proposed OLS, MLE and SME estimators in \Cref{tab:estimator_overview} above, where the definition of losses $\hLmle$ and $\hLsme$ can be found in subsequent sections. We also include the information needed to calculate these estimators---OLS only requires the transition dataset, while MLE and SME require additional information like log-densities or their derivatives. 

In the following sections, we will proceed to introduce the proposed MLE and SME estimators in details.

\begin{table}
  \centering
  \begin{footnotesize}
  \begin{tabular}{c|c|c}
    \specialrule{1.0pt}{0pt}{0pt}
      \textbf{Estimator} & \textbf{Formula} & \textbf{Information} \\\hline
      OLS & $\begin{gathered}
        \displaystyle \hAols := \arg\min_{A} \textstyle\sum_{i=1}^{N} \norm{x'_i - A x_i}_2^2\\
        \hSols := \textstyle \frac{1}{N} \sum_{i=1}^{N} \prn[\big]{ x'_i - \hAols x_i } \prn[\big]{ x'_i - \hAols x_i }^{\top}
      \end{gathered}$ & $\D$ \\\hline
      MLE & $\displaystyle (\hAmle, \hSmle) := \arg\max_{A,\Sg} \hLmle(A,\Sg)$ & $\D$, $\log \phi(\cdot)$ \\\hline
      SME & $\displaystyle (\hAsme, \hSsme) := \arg\min_{A,\Sg} \hLsme(A,\Sg)$ & $\D$, $\g(\cdot)$, $\H(\cdot)$ \\\hline
    \specialrule{1.0pt}{0pt}{0pt}
  \end{tabular}
  \end{footnotesize}
  \caption{An overview of OLS, MLE and SME estimators.}\label{tab:estimator_overview}
\end{table}

\subsection{The Maximum Likelihood Estimator (MLE)}

We first define the maximum likelihood estimator (MLE) with respect to a $\phi$-density family $\set{f_{A,\Sg}}$ with base density $\phi(w)$.

\noindent\fbox{%
\parbox{\dimexpr\linewidth-2\fboxsep-2\fboxrule\relax}{%
\begin{align}\label{eq:MLE-definition}
  \textbf{(MLE)}\quad &(\hAmle, \hSmle) := \arg\max_{A,\Sg}\; \hLmle(A,\Sg),\\
  \text{where}\quad &\hLmle(A,\Sg) := \frac{1}{N} \sum_{i=1}^{N} \log f_{A,\Sg}(x'_i | x_i) \nonumber\\
  &\quad{}= \frac{1}{N} \sum_{i=1}^{N} \brak*{ \log \phi\prn[\Big]{ \Sg^{-1/2} (x'_i-Ax_i) } + \log\det \prn[\Big]{\Sg^{-1/2}} }. \nonumber
\end{align}
}}

\noindent We point out that the empirical loss $\hLmle$ could be understood as the sample-based estimator of the expected loss
\begin{equation*}
  \Lmle(A,\Sg) := \E[(x,x') \sim d^{\D}]{\log f_{A,\Sg}(x' | x)}.
\end{equation*}

Now we proceed to establish two fundamental properties of the MLE---(1) the unbiasedness of MLE, in the sense that the expected loss $\Lmle$ is uniquely maximized at the ground-truth parameters $(\As, \Ss)$; and (2) the characterization of the MLE, which promotes theoretical understanding of the algorithm.

\begin{lemma}[consistency of MLE]\label{thm:MLE-unbiased}
  Under \Cref{assum:regular_Theta} and \ref{assum:data-single_traj}, $(\As, \Ss)$ is the unique maximizer of $\Lmle(A, \Sg)$.
\end{lemma}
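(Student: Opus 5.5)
The plan is the classical Gibbs-inequality argument. Condition on the current state $x$ and exploit that, under the data assumption (\Cref{assum:data-single_traj}), the next state is generated as $x' = \As x + \Ss^{1/2} w$ with $w \sim \phi$ independent of $x$. Hence the conditional law of $x'$ given $x$ has density $f_{\As,\Ss}(\cdot\mid x)$.

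For any $(A,\Sg)\in\varTheta$, write
\begin{equation*}
  \Lmle(\As,\Ss) - \Lmle(A,\Sg) = \E[x]{ \mathrm{KL}\prn[\big]{ f_{\As,\Ss}(\cdot\mid x) \,\big\|\, f_{A,\Sg}(\cdot\mid x) } },
\end{equation*}
where the outer expectation is over the marginal of $x$ under $\dD$. Gibbs' inequality then gives that each conditional KL term is nonnegative, so $(\As,\Ss)$ is a maximizer.

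Before writing this identity I will check that both expected log-likelihoods are finite.
\begin{itemize}
  \item For the true parameter, $\log f_{\As,\Ss}(x'|x) = \log\phi(w) - \tfrac12\log\det\Ss$, so finiteness reduces to $\E{\abs{\log\phi(w)}}<\infty$.
  \item For a general parameter, $\log f_{A,\Sg}$ involves $\log\phi$ evaluated at $\Sg^{-1/2}\prn[\big]{(\As-A)x + \Ss^{1/2}w}$.
\end{itemize}
This is where the finite first moment in \Cref{assum:diff_base_density} and the state moment condition in the data assumption enter. I expect to need either a mild growth condition on $\log\phi$ or to allow $\Lmle(A,\Sg) = -\infty$, which is harmless for a maximization statement. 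I will handle this by noting that the KL divergence is well defined in $[0,\infty]$ whenever $\Lmle(\As,\Ss)$ is finite.

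\textbf{Uniqueness.} If $\Lmle(A,\Sg) = \Lmle(\As,\Ss)$, the expected KL is zero. Then the KL vanishes for $\dD$-almost every $x$, so $f_{A,\Sg}(\cdot\mid x) = f_{\As,\Ss}(\cdot\mid x)$ a.e. in $x'$, and by continuity of $\phi$ everywhere in $x'$. Comparing means and covariances of these two conditional densities (since $w$ has zero mean and unit covariance) gives:
\begin{itemize}
  \item $\Sg = \Ss$;
  \item $Ax = \As x$ for $\dD$-a.e. $x$.
\end{itemize}

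The main obstacle is upgrading ``a.e. $x$ in the support of the state distribution'' to the identity for all $x$ needed to invoke the identifiability part of \Cref{assum:regular_Theta}. I will use that the state marginal under \Cref{assum:data-single_traj} is not supported in a proper subspace. Specifically, the state at any time after the first includes the additive noise $\Ss^{1/2}w$ with positive density $\phi$, so the state has a positive density on $\R^d$. Hence $\{x : (A-\As)x = 0\}$ has positive measure only if $A = \As$. This yields $(A,\Sg) = (\As,\Ss)$, consistent with \Cref{assum:regular_Theta}, and completes uniqueness.
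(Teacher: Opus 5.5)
Your argument is correct and follows the same route as the paper. Nonnegativity of the conditional KL divergence (Gibbs' inequality) gives maximality, and the equality case plus identifiability gives uniqueness. Beyond that, your added care fills in steps the paper glosses over when it simply asserts equality of the conditionals for all $x\in\R^d$: you note that $\Lmle(\As,\Ss)$ must be finite (which, as in the paper, is implicitly assumed rather than implied by the stated assumptions), and you upgrade equality for $\dD$-a.e.\ $x$ to $(A,\Sg)=(\As,\Ss)$ using moment matching and the everywhere-positive density of the states $x_i$, $i\ge 1$.
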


\begin{proof}
  Note that, by Gibbs' inequality we have
  \begin{align*}
    \Lmle(A, \Sg)
    &= \int_{x} \dD(x) \int_{x'} f_{\As,\Ss}(x' | x) \log f_{A,\Sg}(x' | x) \diff x' \diff x \\
    &\leq \int_{x} \dD(x) \int_{x'} f_{\As,\Ss}(x' | x) \log f_{\As,\Ss}(x' | x) \diff x' \diff x,
  \end{align*}
  where equality is achieved if and only if
  \begin{equation*}
    f_{A,\Sg}(\cdot | x) = f_{\As,\Ss}(\cdot | x),~ \forall x \in \R^d.
  \end{equation*}
  Then the proof is completed by applying \Cref{assum:regular_Theta}. 
\end{proof}

\rev{
\begin{lemma}[asymptotic normality of MLE]
  Under \Cref{assum:diff_base_density} and \ref{assum:regular_Theta}, let $\htheta := (\hA, \hS)$ and $\thetas := (\As, \Ss)$, such that $\sqrt{N} \prn[\big]{ \htheta - \thetas } \todist \mathcal{N}\prn*{0, \frac{1}{I(\theta)}}$, where $I(\theta) := \Var[\theta]{\nabla_{\theta} \log f_{\theta}(x)}$ denotes the Fisher information matrix.
\end{lemma}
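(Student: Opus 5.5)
The plan is the classical M-estimation route to asymptotic normality: Taylor-expand the empirical score around $\thetas$, then combine a central limit theorem for the score with a law of large numbers for the Hessian. We parameterize $\theta=(A,\Sg)$ by $\mathrm{vec}(A)$ together with the free entries of $\Sg$ (or equivalently $\Sg^{-1/2}$). By \Cref{assum:regular_Theta}, $\thetas$ is interior, so the map is smooth near $\thetas$ and $\nabla_\theta \hLmle(\htheta)=0$ once $\htheta$ is close to $\thetas$. Closeness comes from consistency: \Cref{thm:MLE-unbiased} shows $\thetas$ is the unique maximizer of $\Lmle$. A uniform law of large numbers, $\sup_{\theta\in K}|\hLmle-\Lmle|\to0$ on a compact neighborhood $K$, then gives $\htheta\to\thetas$ in probability by the standard argmax argument. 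The uniform LLN rests on continuity of $\log\phi$ and the first-moment condition in \Cref{assum:diff_base_density}.

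Next we compute the per-sample score in terms of the paper's quantities. Write $w=\Sg^{-1/2}(x'-Ax)$. Then
\begin{equation*}
\nabla_A \log f=\Sg^{-1/2}\g(w)\,x^{\top},
\end{equation*}
and differentiating in $\Sg$ gives terms of the form $-\tfrac12\Sg^{-1}$ plus a contraction of $\g(w)w^{\top}$ with the derivative of $\Sg^{-1/2}$. At $\thetas$, $w$ is the true unit noise, independent of $x$. Integration by parts gives $\E{\g(w)}=0$ and $\E{\g(w)w^{\top}}=-I$, which requires $\phi$ to vanish at infinity fast enough. Hence the score has mean zero. Taylor-expanding $0=\nabla\hLmle(\htheta)$ around $\thetas$ yields
\begin{equation*}
\sqrt N(\htheta-\thetas)=-\bigl[\nabla^2\hLmle(\bar\theta)\bigr]^{-1}\sqrt N\,\nabla\hLmle(\thetas).
\end{equation*}
The Hessian involves $\H(w)$, which is continuous by \Cref{assum:diff_base_density}. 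It converges to $\E{\nabla^2\log f_{\thetas}}=-I(\thetas)$, where the information identity follows from differentiating $\int f=1$ twice under the integral. The identifiability in \Cref{assum:regular_Theta} lets us argue that $I(\thetas)$ is nonsingular. Slutsky then gives the limit $\mathcal N(0,I(\thetas)^{-1})$, which is the stated result with $1/I(\theta)$ read as the inverse evaluated at $\thetas$.

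The main obstacle is the CLT for $\sqrt N\nabla\hLmle(\thetas)$, because the data may be a single trajectory and are not i.i.d. The fix is to note that the per-step scores $s_i=s(x_i,w_i)$ form a martingale difference sequence with respect to the filtration generated by past states, since $w_i$ is independent of $x_i$ and $\E{s\mid x}=0$. I would therefore apply the martingale CLT. Its conditional variance condition, $\frac1N\sum\E{s_is_i^{\top}\mid x_i}\to I(\thetas)$, needs ergodicity and stationarity of the state process, for example stability of $\As$ under the (unseen) data assumption \Cref{assum:data-single_traj}, together with second moments of $\g(w)$ and $\g(w)w^\top$ and of $x$. I would impose these moment conditions explicitly, since they are not literally contained in \Cref{assum:diff_base_density}. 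I would also need a Lindeberg condition and a dominating function for $\H$ near $\thetas$ to make the Hessian LLN uniform. These integrability conditions are the genuinely delicate part; the rest is routine.
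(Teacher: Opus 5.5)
Your route is the standard M-estimation argument, whereas the paper does not carry out any argument of its own. It invokes the i.i.d.\ asymptotic-normality theorem for MLEs (Theorem 9.18 in \cite{wasserman2013all}) and asserts that it suffices to check three things: differentiability of $f_\theta$, interiority of $\thetas$, and the unique-maximizer property of \Cref{thm:MLE-unbiased}. You instead unpack that theorem (consistency, Taylor expansion of the score, score CLT, Hessian LLN, information identity, Slutsky). In doing so you correctly flag what the citation glosses over. First, single-trajectory transitions are not i.i.d., so the score CLT must be a martingale CLT backed by an ergodic theorem for the state chain. Stability of $\As$ plus a positive noise density with finite first moment gives geometric ergodicity, so you do not even need to start in stationarity. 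Second, $I(\thetas)$ has to be read as the stationary average of the conditional Fisher information. Third, the moment and domination conditions on $\g$, $\g(w)w^{\top}$, $\H$ and $x$ are genuinely extra hypotheses beyond \Cref{assum:diff_base_density} and \Cref{assum:regular_Theta}. The paper's version buys brevity; yours buys an argument that actually fits the paper's data model, at the price of stating those conditions explicitly.

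One step does not work as you justify it. Identifiability does not imply that $I(\thetas)$ is nonsingular: the family $\mathcal{N}(\theta^3,1)$ is identifiable but has zero information at $\theta=0$. Here you can prove nonsingularity directly. Parameterize by $B=\Sg^{-1/2}$. If $(\Delta_A,\Delta_B)$, with $\Delta_B$ symmetric, is a null direction of $I(\thetas)$, the directional score vanishes almost surely, i.e.
\begin{equation*}
  \mathrm{tr}(M)+\g(w)^{\top}(Mw+u)=0,\qquad M:=\Delta_B\Ss^{1/2},\quad u:=-\Ss^{-1/2}\Delta_A x .
\end{equation*}
Fix $x$ and integrate by parts, under the same decay conditions you already need for $\mathbb{E}[\g(w)w^{\top}]=-I$. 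This gives $\mathbb{E}\bigl[(\mathrm{tr}(M)+\g(w)^{\top}(Mw+u))h(w)\bigr]=-\mathbb{E}\bigl[(Mw+u)^{\top}\nabla h(w)\bigr]$.
\begin{itemize}
  \item Taking $h(w)=w_k$ yields $u=0$.
  \item Taking $h(w)=\tfrac12 w^{\top}Sw$ with $S$ symmetric, and using $\mathbb{E}[ww^{\top}]=I$, yields $\mathrm{tr}(M^{\top}S)=0$, i.e.\ $M+M^{\top}=0$.
  \item Hence $\Delta_B\Ss^{1/2}+\Ss^{1/2}\Delta_B=0$, which forces $\Delta_B=0$ because $\Ss^{1/2}\succ\zero$.
  \item Finally, $\Delta_A x=0$ for almost every $x$ forces $\Delta_A=0$, provided the state law is exciting.
\end{itemize}
That excitation is not supplied by the pointwise-in-$x$ identifiability clause of \Cref{assum:regular_Theta}. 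In your stable single-trajectory setting it is automatic, since the stationary covariance is $\sum_{k\ge 0}\As^k\Ss(\As^k)^{\top}\succeq\Ss\succ\zero$. The same fact is what actually makes $\thetas$ the unique maximizer of $\Lmle$ under the data law.

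Your consistency step also needs two repairs. A uniform LLN on a compact neighborhood $K$ only makes the maximizer over $K$ consistent, and $\varTheta$ is not assumed compact. You therefore need a Wald-type argument that $\sup_{\theta\notin K}\hLmle(\theta)<\hLmle(\thetas)$ with probability tending to one, or you should state the result for a consistent root of the likelihood equations. In addition, the ULLN itself needs an integrable envelope for $\sup_{\theta\in K}\lvert\log f_\theta\rvert$, which the first-moment condition alone does not provide.
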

}

\rev{
\begin{proof}
  This is a direct corollary of the asymptotic normality results for general MLEs (see, e.g., Theorem 9.18 in \cite{wasserman2013all}). In fact, it only suffices to verify that $f_{\theta}$ is differentiable in $\theta$ (\Cref{assum:diff_base_density}), $\thetas$ lies in the interior of $\varTheta$ (\Cref{assum:regular_Theta}), and $\thetas$ is the unique maximizer of $\Lmle(A, \Sg)$ (\Cref{thm:MLE-unbiased}).
\end{proof}
}

\begin{lemma}[MLE characterization]\label{thm:MLE-character_general}
  Under \Cref{assum:diff_base_density}, the MLE $(\hAmle, \hSmle)$ with respect to a $\phi$-density family $\set{f_{A,\Sg}}$ with base density $\phi$ satisfies the following system of equations:
  \begin{equation*}
    \left\lbrace \begin{array}{@{}l}
      \textstyle\sum_{i=1}^{N} \g\prn[\Big]{ \hSmle^{-1/2} \prn[\big]{ x'_i - \hAmle x_i } } x_i^{\top} = \zero,\\
      \hSmle^{1/2} + \frac{1}{N} \textstyle\sum_{i=1}^{N} \g\prn[\Big]{ \hSmle^{-1/2} \prn[\big]{ x'_i - \hAmle x_i } } \prn[\big]{ x'_i - \hAmle x_i }^{\top} = \zero.
    \end{array} \right.
  \end{equation*}
  Recall that $\g(w) := \nabla_{w} \log \phi(w)$ is the derivative of the log-base-density (see \Cref{assum:diff_base_density} for detailed definition).
\end{lemma}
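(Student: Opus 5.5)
The plan is to derive both equations as the first-order stationarity conditions of $\hLmle$ at an interior maximizer. Since $\phi$ is positive and $C^2$ (\Cref{assum:diff_base_density}), the map $(A,\Sg)\mapsto \hLmle(A,\Sg)$ is differentiable on the open set $\R^{d\times d}\times\S^d_{\succ\zero}$. Hence any maximizer must make the gradient with respect to $A$ vanish, as well as the gradient with respect to $\Sg$ (or any smooth reparameterization of it). I will differentiate with respect to $A$ and with respect to $S := \Sg^{-1/2}$, writing $r_i := x'_i - Ax_i$ and $w_i := S r_i$. Reparameterizing by $S$ is the key simplification: the map $\Sg\mapsto\Sg^{-1/2}$ is a diffeomorphism of $\S^d_{\succ\zero}$ onto itself, so stationarity in $\Sg$ is equivalent to stationarity in $S$, and the loss becomes $\frac1N\sum_i[\log\phi(S r_i) + \log\det S]$.

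\emph{Gradient in $A$.} By the chain rule, the directional derivative of $\log\phi(S(x'_i - Ax_i))$ in the direction $\Delta$ is $-\g(w_i)^{\top} S \Delta x_i = -\operatorname{tr}\prn[\big]{ x_i \g(w_i)^{\top} S \Delta }$. Therefore
\begin{equation*}
  \nabla_A \hLmle = -\tfrac1N \textstyle\sum_i S\, \g(w_i)\, x_i^{\top}.
\end{equation*}
Setting this to zero and multiplying by $S^{-1}=\hSmle^{1/2}$, which is invertible, gives the first equation.

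\emph{Gradient in $S$.} Varying $S$ over symmetric directions $\Delta$, the derivative of $\log\phi(Sr_i)$ is $\g(w_i)^{\top}\Delta r_i = \operatorname{tr}\prn[\big]{ \Delta\, r_i \g(w_i)^{\top} }$, and the derivative of $\log\det S$ is $\operatorname{tr}(S^{-1}\Delta)$. Stationarity therefore says that $M := S^{-1} + \frac1N\sum_i r_i\g(w_i)^{\top}$ is orthogonal to every symmetric $\Delta$, i.e.\ its symmetric part vanishes. Rewriting with $S^{-1}=\hSmle^{1/2}$ gives the second equation, up to a transpose on the sum term.

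The main obstacle is this symmetric-versus-full-matrix subtlety. Restricted to symmetric directions, stationarity only yields $\mathrm{sym}(M)=\zero$, whereas the stated equation asserts that the full (nonsymmetric) sum $\frac1N\sum_i \g(w_i) r_i^{\top}$ equals $-\hSmle^{1/2}$. I will close this gap in one of two ways.
\begin{itemize}
  \item Pass to an unconstrained square-root parameterization. Write $\Sg^{1/2}$ more generally as any invertible $L$ with density $|\det L|^{-1}\phi(L^{-1}(x'-Ax))$, and note the MLE value is invariant along the fibre. This needs $\phi$ rotation-invariant, which holds in the elliptical case.
  \item Alternatively, use the first equation. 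Right-multiplying it by $\hAmle^{\top}$ shows $\sum_i \g(w_i) x_i^{\top}\hAmle^{\top}=\zero$, so $\sum_i \g(w_i) r_i^{\top} = \sum_i \g(w_i) x_i'^{\top}$. Then argue symmetry from there.
\end{itemize}
I would first try the direct route of interpreting the second condition as stated in its symmetrized form. I would remark that for elliptical $\phi$ one has $\g(w_i)r_i^{\top} = \rho(\cdot)\, S r_i r_i^{\top}$, which makes $\sum_i \g(w_i) r_i^{\top}$ equal to $S$ times a symmetric matrix. Combined with symmetry of $M$, this gives the full matrix equation, since $S$ and $M$ then commute appropriately. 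The differentiation steps themselves are routine matrix calculus.
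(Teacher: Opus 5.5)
Your $A$-gradient and your derivative in $S:=\hSmle^{-1/2}$ are the same computations the paper does. The paper, however, differentiates $\hLmle$ with respect to $\Sg^{-1/2}$ as a free $d\times d$ matrix (via \eqref{eq:matrix_diff-log_det} and \eqref{eq:matrix_diff-CAx}), sets that full gradient to zero, and reads off the second equation; it never addresses the symmetry constraint. Your diagnosis is therefore correct. But your proposal as written does not prove the stated second equation for general $\phi$, and no further argument will close this in that generality.

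To see why, set $r_i:=x'_i-\hAmle x_i$, $w_i:=Sr_i$ and $G:=\frac1N\sum_i\g(w_i)w_i^{\top}$. Then $\frac1N\sum_i\g(w_i)r_i^{\top}=GS^{-1}$, so the stated equation is $(I+G)S^{-1}=\zero$. Stationarity over symmetric directions gives only that $(I+G)S^{-1}$ is skew-symmetric. The difference between the two is exactly stationarity along $S\mapsto e^{tK}S$ with $K^{\top}=-K$. That derivative is $\mathrm{tr}(KG^{\top})$, which vanishes for all skew $K$ iff $G$ is symmetric. In that case $X:=I+G$ satisfies $XS^{-1}+S^{-1}X=\zero$, whose only solution is $X=\zero$ because $S^{-1}\succ\zero$. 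These rotated square roots lie outside the model, so for a $\phi$ that is not rotation-invariant the constrained maximizer has no reason to be stationary along them. Indeed, the stated system has $2d^2$ scalar equations in $d^2+d(d+1)/2$ unknowns.

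Your second bullet cannot help: right-multiplying the first equation by $\hAmle^{\top}$ says nothing about the skew part of $G$. For general $\phi$, what actually follows is your symmetrized equation. The stated form is correct for an enlarged model with an arbitrary invertible scale factor $L$, density $|\det L|^{-1}\phi(L^{-1}(x'-Ax))$, maximized over $L$. That is what the paper's calculation really establishes.

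For elliptical $\phi$, which is the only case used downstream in \Cref{thm:MLE-character_elliptical}, your plan does go through, and both of your arguments can be made precise.

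For the first argument, write an invertible $T$ in polar form $T=QU$ with $U=(T^{\top}T)^{1/2}$. Rotation invariance gives $\phi(QUr)=\phi(Ur)$, and $|\det T|=\det U$. So the likelihood at $T$ equals that at the symmetric factor $U$, the constrained maximizer is an unconstrained local maximizer over invertible $T$, and the full gradient $\frac1N\sum_i\g(w_i)r_i^{\top}+S^{-1}$ vanishes.

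For the second argument, replace ``$S$ and $M$ commute appropriately'' by a Lyapunov argument. With $P:=\frac1N\sum_i\rho_i r_ir_i^{\top}$ you get $M=S^{-1}+PS$, so $\mathrm{sym}(M)=\zero$ reads $SP+PS=-2S^{-1}$. Its unique solution is $P=-S^{-2}$, hence $S^{-1}+SP=\zero$, which is the stated equation. Moreover $\hSmle=S^{-2}=-P$ recovers the corollary's covariance formula.

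In either case you, like the paper, must also assume $(\hAmle,\hSmle)$ is interior to $\varTheta$ for the first-order conditions to apply.
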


\begin{proof}
  To apply the first-order criteria, we take the partial derivative of $\hLmle$ with respect to $A$ to obtain
  \begin{equation*}
    \zero = \left. \frac{\partial \hLmle}{\partial A} \right\vert_{\hAmle, \hSmle} = - \hSmle^{-1/2} \sum_{i=1}^{N} \g\prn[\Big]{ \hSmle^{-1/2} \prn[\big]{ x'_i - \hAmle x_i } } x_i^{\top},
  \end{equation*}
  which is equivalent to the first equation since $\hSmle \in \varTheta_{\Sg} \subseteq \S^d_{\succ \zero}$. Similarly, we take the partial derivative of $\hLmle$ with respect to $\Sg^{-1/2}$ (instead of $\varSigma$ for the sake of simplicity) to obtain
  \begin{align*}
    \zero &= \left. \frac{\partial \hLmle}{\partial \Sg^{-1/2}} \right\vert_{\hAmle, \hSmle} \\
    &= \frac{1}{N} \sum_{i=1}^{N} \brak*{ \g\prn[\Big]{ \hSmle^{-1/2} \prn[\big]{ x'_i - \hAmle x_i } } \prn[\big]{ x'_i - \hAmle x_i }^{\top} + \hSmle^{1/2} } \\
    &= \hSmle^{1/2} + \frac{1}{N} \textstyle\sum_{i=1}^{N} \g\prn[\Big]{ \hSmle^{-1/2} \prn[\big]{ x'_i - \hAmle x_i } } \prn[\big]{ x'_i - \hAmle x_i }^{\top},
  \end{align*}
  where we apply formulae \eqref{eq:matrix_diff-log_det} and \eqref{eq:matrix_diff-CAx}.
\end{proof}

The above characterization appears to be complicated for a generic $\phi$-density family. Nevertheless, for elliptical families the solution can be expressed in simplified forms that better connects to the OLS estimator, as shown in the lemma below.

\begin{corollary}[MLE characterization, elliptical family]\label{thm:MLE-character_elliptical}
  For an elliptical family $\set{f_{A,\Sg}}$ with base density $\phi(w) = \psi(w^{\top} w)$, under \Cref{assum:diff_base_density} and the regularity assumption that $\sum_{i=1}^{N} \rho_i x_i x_i^{\top} \succ \zero$, the MLE $(\hAmle, \hSmle)$ satisfies the following system of equations:
  \begin{equation}\label{eq:MLE-character_elliptical}
    \left\lbrace \begin{array}{@{}l}
      \hAmle = \prn*{\sum_{i=1}^{N} \rho_i x'_i x_i^{\top}} \prn*{\sum_{i=1}^{N} \rho_i x_i x_i^{\top}}^{-1},\\
      \hSmle = - \frac{1}{N} \sum_{i=1}^{N} \rho_i \prn[\big]{ x'_i - \hAmle x_i } \prn[\big]{ x'_i - \hAmle x_i }^{\top},
    \end{array} \right.
  \end{equation}
  where
  \begin{equation*}
    \rho_i = \rho\prn[\Big]{ \prn[\big]{ x'_i - \hAmle x_i }^{\top} \hSmle^{-1} \prn[\big]{ x'_i - \hAmle x_i } }
  \end{equation*}
  are dataset-specific coefficients determined by the base density, where $\rho(z) = \frac{2\psi'(z)}{\psi(z)}$ (see \Cref{thm:lemma-diff_elliptical} for details).
\end{corollary}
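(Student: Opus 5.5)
The plan is to specialize the two first-order conditions of \Cref{thm:MLE-character_general} to the elliptical case using \Cref{thm:lemma-diff_elliptical}. Write $r_i := x'_i - \hAmle x_i$ and $w_i := \hSmle^{-1/2} r_i$. Since $\hSmle$ is symmetric positive definite, $w_i^{\top} w_i = r_i^{\top} \hSmle^{-1} r_i$. By \Cref{thm:lemma-diff_elliptical}, $\g(w_i) = \rho(w_i^{\top} w_i)\, w_i = \rho_i \hSmle^{-1/2} r_i$, with $\rho_i$ exactly as defined in the statement.

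\textbf{First equation.} Substituting into the first equation of \Cref{thm:MLE-character_general} gives $\hSmle^{-1/2} \sum_i \rho_i r_i x_i^{\top} = \zero$. Left-multiplying by $\hSmle^{1/2}$ removes the invertible factor, so $\sum_i \rho_i (x'_i - \hAmle x_i) x_i^{\top} = \zero$. Rearranging yields
\[
\hAmle \sum_i \rho_i x_i x_i^{\top} = \sum_i \rho_i x'_i x_i^{\top}.
\]
The regularity assumption $\sum_i \rho_i x_i x_i^{\top} \succ \zero$ makes this matrix invertible, so right-multiplying by its inverse gives the stated formula for $\hAmle$.

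\textbf{Second equation.} Substituting into the second equation gives $\hSmle^{1/2} + \frac{1}{N} \hSmle^{-1/2} \sum_i \rho_i r_i r_i^{\top} = \zero$. Left-multiplying by $\hSmle^{1/2}$ yields $\hSmle = -\frac{1}{N}\sum_i \rho_i r_i r_i^{\top}$, which is the claimed expression.

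\textbf{Main obstacle.} The only delicate point is the matrix-derivative convention in the second equation of \Cref{thm:MLE-character_general}, which was obtained by differentiating with respect to $\Sg^{-1/2}$. If that derivative produces $\g(w_i) r_i^{\top}$ rather than its transpose, I would note the following. The sum $\sum_i \rho_i r_i r_i^{\top}$ is symmetric. Moreover, $\hSmle^{-1/2}$ and $\hSmle^{1/2}$ commute, since both are functions of $\hSmle$. So either ordering gives the same identity, and symmetry of the resulting $\hSmle$ is consistent.

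Everything else is routine substitution. The coupling of the two equations through the $\rho_i$ is intentional: the corollary states a fixed-point characterization, not a closed-form solution.
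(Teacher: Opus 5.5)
Your proposal is correct and is exactly the paper's argument. You substitute $\g(w) = \rho(w^{\top} w)\, w$ from \Cref{thm:lemma-diff_elliptical} into the two first-order conditions of \Cref{thm:MLE-character_general}, cancel the invertible factor $\hSmle^{-1/2}$, and use $\sum_{i} \rho_i x_i x_i^{\top} \succ \zero$ to solve for $\hAmle$, simply writing out the algebra that the paper's one-line proof leaves implicit.
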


\begin{proof}
  Plugging \Cref{thm:lemma-diff_elliptical} into \Cref{thm:MLE-character_general} gives the result.
\end{proof}

The above corollary clearly reveals the connection of MLE and the OLS estimator for elliptical families. On the one hand, MLE is in general different from the OLS estimator when $\rho_i$'s are non-degenerative (for a degenerative example, consider the case for Gaussian families, where $\rho_i \equiv -1$ is constant and is thus canceled out). On the other hand, MLE for elliptical families take a similar form as OLS estimators, which can be viewed a reweighted version of the OLS formula. Indeed, since $|\rho(z)| \to 0$ as $z \to +\infty$, coefficients in \eqref{eq:MLE-character_elliptical} could be interpreted as reducing the weights of ``outlier'' noise samples that are significantly larger than the typical noise magnitude, which potentially enhances the convergence performance of the estimator.

\subsection{The Score-Matching Estimator (SME)}

In this section, we consider another type of estimators, i.e. the score-matching estimators (SME) first introduced in \cite{hyvarinen2005estimation}:
\begin{equation*}
  (\hAsme, \hSsme) := \arg\min_{A,\Sg}\; \hLsme(A,\Sg),
\end{equation*}
where $\hLsme$ is selected as \emph{an} empirical version of
\begin{equation*}
  \Lsme(A,\Sg) := \E[(x,x') \sim d^{\D}]{ \norm[\big]{ \nabla_{x'} \log f_{A,\Sg}(x' | x) - \nabla_{x'} \log f_{\As,\Ss}(x' | x) }_2^2 }.
\end{equation*}
We point out that the expected loss $\Lsme$ is intractable in its current form, since the ground-truth density $f_{\As,\Ss}(x' | x)$ is involved. However, it is shown in \cite{hyvarinen2005estimation} that the expected loss can be equivalently rewritten in a tractable form using the integration-by-part trick, as summarized by the following lemma.

\begin{lemma}
  Under \Cref{assum:diff_base_density} and \ref{assum:data-single_traj}, we have
  \begin{align*}
    \Lsme(A, \Sg) &:= \mathbb{E}_{(x,x') \sim \dD} \Bigl[ \norm[\big]{\nabla_{x'} \log f_{A,\Sg}(x' | x)}_2^2 \\
    &\hspace{7em} {}+ 2 \tr\prn*{ \nabla_{x'}^2 \log f_{A,\Sg}(x' | x) } \Bigr].
  \end{align*}
\end{lemma}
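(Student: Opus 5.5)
The plan is Hyv\"arinen's integration-by-parts argument, applied conditionally on the current state $x$. The identity should be read as holding up to an additive constant independent of $(A,\Sg)$, since the term involving only the true density drops out. Write $q(x') := f_{\As,\Ss}(x'|x)$ and $p(x') := f_{A,\Sg}(x'|x)$ for fixed $x$. Expanding the square inside $\Lsme$ gives three terms:
\begin{equation*}
  \norm{\nabla_{x'}\log p}_2^2 - 2\,\nabla_{x'}\log p^{\top}\nabla_{x'}\log q + \norm{\nabla_{x'}\log q}_2^2 .
\end{equation*}
The last term does not depend on $(A,\Sg)$, so it is a constant $C$. It is finite whenever $\E{\norm{\g(w)}_2^2}<\infty$, and I will note this as a mild integrability requirement implicit in the statement. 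The first term already appears in the claimed expression, so everything reduces to the cross term.

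For the cross term, condition on $x$ and use $q\,\nabla_{x'}\log q = \nabla_{x'} q$:
\begin{equation*}
  -2\int \nabla_{x'}\log p(x')^{\top}\, q(x')\,\nabla_{x'}\log q(x')\,\diff x'
  = -2\int \nabla_{x'}\log p(x')^{\top}\nabla_{x'} q(x')\,\diff x' .
\end{equation*}
Integrating by parts coordinatewise, i.e.\ applying the divergence theorem on balls of radius $R\to\infty$, turns this into
\begin{equation*}
  2\int q(x')\,\tr\prn*{\nabla_{x'}^2\log p(x')}\,\diff x' ,
\end{equation*}
plus a boundary term $\lim_{R\to\infty}\oint_{\norm{x'}=R} q\,\nabla_{x'}\log p\cdot n\,\diff S$. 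Averaging over $x\sim\dD$ then yields $\Lsme = \E{\norm{\nabla_{x'}\log p}_2^2 + 2\tr(\nabla_{x'}^2\log p)} + C$, which is the claim. The interchange of integrals is justified by Fubini, using the product structure of $\dD$ under \Cref{assum:data-single_traj} (the transition density given $x$ is $q$).

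To make the terms explicit, change variables $w = \Ss^{-1/2}(x'-\As x)$. Then
\begin{equation*}
  \nabla_{x'}\log p = \Sg^{-1/2}\g\prn[\big]{\Sg^{-1/2}(x'-Ax)}, \qquad \nabla_{x'}^2\log p = \Sg^{-1/2}\H(\cdot)\,\Sg^{-1/2},
\end{equation*}
both continuous by \Cref{assum:diff_base_density}, so the divergence theorem applies on every ball.

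The main obstacle is showing that the boundary term vanishes and that the integrals are finite. My first attempt would be to assume, as Hyv\"arinen does, that $q(x')\,\nabla_{x'}\log p(x')\to 0$ as $\norm{x'}\to\infty$, together with $\E{\norm{\g}^2}<\infty$ and $\E{\abs{\tr\H}}<\infty$. I would state these explicitly as the regularity conditions under which \Cref{assum:diff_base_density} is intended. If a pointwise decay condition is unavailable, the fallback is to use integrability of $q\,\nabla_{x'}\log p$, which follows from Cauchy--Schwarz and the finite second moments. This integrability gives a sequence $R_k\to\infty$ along which the surface integral tends to zero, which suffices for the limit argument.
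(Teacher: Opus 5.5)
Your proposal is correct and follows essentially the same route as the paper. The paper's proof simply defers to Hyv\"arinen's Appendix A: expand the square, drop the $(A,\Sg)$-independent term $\mathbb{E}\,\|\nabla_{x'}\log f_{\As,\Ss}(x'|x)\|_2^2$ as a constant, and integrate the cross term by parts conditionally on $x$.

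Your added remarks are also accurate:
\begin{itemize}
  \item The identity only holds up to that additive constant.
  \item It needs integrability and boundary conditions beyond \Cref{assum:diff_base_density}. The second moment of the model score and the first moment of its Hessian trace should be taken under the true transition density, for every $(A,\Sg)$, not just under $w\sim\phi$.
  \item Your sequence-of-radii argument is a valid way to replace Hyv\"arinen's pointwise decay condition by mere integrability.
\end{itemize}
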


\begin{proof}
  See Appendix A of \cite{hyvarinen2005estimation} for details.
\end{proof}

Consequently, the SME with respect to a $\phi$-density family $\set{f_{A,\Sg}}$ with base density $\phi(w)$ can be defined as follows.

\noindent\fbox{%
\parbox{\dimexpr\linewidth-2\fboxsep-2\fboxrule\relax}{%
\begin{align}\label{eq:SME-definition}
  \textbf{(SME)}\quad &(\hAsme, \hSsme) := \arg\min_{A,\Sg}\; \hLsme(A,\Sg),\\
  \text{where}\quad &\hLsme(A, \Sg) := \sum_{i=1}^{N} \Bigl[ \norm[\big]{\nabla_{x'} \log f_{A,\Sg}(x'_i | x_i)}_2^2 \nonumber\\
  &\hspace{7em} {}+ 2 \tr\prn*{ \nabla_{x'}^2 \log f_{A,\Sg}(x'_i | x_i) } \Bigr] \nonumber\\
  &\hspace{4.5em}{}= \sum_{i=1}^{N} \Bigl[ \norm[\big]{\Sg^{-1/2} \g\prn[\Big]{\Sg^{-1/2} (x'_i - Ax_i)} }_2^2 \nonumber\\
  &\hspace{7em} {}+ 2 \tr\prn*{ \Sg^{-1} \H\prn[\Big]{\Sg^{-1/2} (x'_i - Ax_i)} } \Bigr]. \nonumber
\end{align}
}}

\noindent Recall that $\g(\cdot)$ and $\H(\cdot)$ are defined in \Cref{assum:diff_base_density}.

Parallel to the previous section, we first show that the SME is unbiased, as summarized in the following lemma.

\begin{lemma}[consistency of SME]\label{thm:SME-unbiased}
  Under \Cref{assum:regular_Theta}, $(\As, \Ss)$ is the unique maximizer of $\Lsme(A, \Sg)$.
\end{lemma}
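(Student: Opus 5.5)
Note that the statement says ``maximizer'', but by the definition of $\Lsme$ as an expected squared distance it is the \emph{minimizer} that is meant; I will prove that $(\As,\Ss)$ is the unique minimizer of $\Lsme$. The plan works directly with the original (intractable) form of $\Lsme$ rather than the integrated-by-parts form, since the two agree by the preceding lemma.

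\textbf{Step 1: minimality at the truth.} Writing
\begin{equation*}
  \Lsme(A,\Sg) = \E[(x,x')\sim\dD]{\norm[\big]{\nabla_{x'}\log f_{A,\Sg}(x'|x) - \nabla_{x'}\log f_{\As,\Ss}(x'|x)}_2^2},
\end{equation*}
we see that $\Lsme \ge 0$ everywhere, and $\Lsme(\As,\Ss)=0$. So $(\As,\Ss)$ is a minimizer, and it remains to show uniqueness.

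\textbf{Step 2: from zero loss to equal scores.} Suppose $\Lsme(A,\Sg)=0$. Since the integrand is nonnegative, for $\dD$-almost every $x$ the score difference vanishes for almost every $x'$. The conditional density $f_{\As,\Ss}(\cdot|x)$ is positive on $\R^d$ by \Cref{assum:diff_base_density}. The score $\nabla_{x'}\log f_{A,\Sg}(x'|x) = -\Sg^{-1/2}\g\prn{\Sg^{-1/2}(x'-Ax)}$ is continuous in $x'$. Hence the two scores agree for every $x'\in\R^d$. Integrating in $x'$ then gives $\log f_{A,\Sg}(\cdot|x) = \log f_{\As,\Ss}(\cdot|x) + c(x)$. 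Both sides are probability densities in $x'$, so $c(x)=0$ and $f_{A,\Sg}(\cdot|x)=f_{\As,\Ss}(\cdot|x)$.

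\textbf{Step 3: from almost every $x$ to all $x$.} \Cref{assum:regular_Theta} requires equality of the conditional densities for all $x$, whereas Step 2 only gives it for $\dD$-almost every $x$. This is the main obstacle. I would handle it as follows.
\begin{enumerate}
  \item Use \Cref{assum:data-single_traj} (as in \Cref{thm:MLE-unbiased}) to argue that the support of $\dD$ is rich enough. Typically $\dD$ has a positive density, or at least its support spans $\R^d$.
  \item Note that $f_{A,\Sg}(x'|x)$ depends on $x$ only through $Ax$. Equality of the densities at a point $x$ forces equal means and covariances of $x'$ given $x$, since $w$ has zero mean and unit covariance. This yields $Ax=\As x$ and $\Sg=\Ss$.
  \item Linearity of $x\mapsto Ax$ then upgrades $Ax=\As x$ from the support of $\dD$ to its span, which is all of $\R^d$. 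Hence $A=\As$.
\end{enumerate}
Equivalently, one can check that equality of the densities then holds for all $x$ and invoke the identifiability condition in \Cref{assum:regular_Theta}. Either route concludes that $(A,\Sg)=(\As,\Ss)$.
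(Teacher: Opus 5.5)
Your proposal is correct and follows the paper's argument. $\Lsme\ge 0$ with value $0$ at $(\As,\Ss)$; zero loss then forces equal scores; integration plus normalization forces equal conditional densities; and \Cref{assum:regular_Theta} finishes. You are also right that ``maximizer'' should read ``minimizer.''

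Your Step~3 adds care the paper skips, since the paper applies \Cref{assum:regular_Theta} as though the densities agreed for every $x$. The hedge you leave there is easily discharged: $\phi>0$ and $\Ss\succ 0$ give each state $x_i$, $i\ge 1$, an everywhere-positive density, so any $\dD$-full-measure set of $x$ spans $\R^d$.
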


\begin{proof}
  This result is evident from the original definition of $\Lsme$. On the one hand, $\Lsme$ is non-negative, with the minimum $0$ achievable at $(A, \Sg) = (\As, \Ss)$. On the other hand, for any $\nabla_{x'} \log f_{A,\Sg}(x' | x) = \nabla_{x'} \log f_{\As,\Ss}(x' | x)$, we must have $\log f_{A,\Sg}(x' | x) = \log f_{\As,\Ss}(x' | x) + C$, where $C = 0$ due to the unit normalization of densities. Finally, the proof is completed by applying \Cref{assum:regular_Theta}.
\end{proof}

\rev{
Parallel to the previous section, we can also establish asymptotic normality of SME as a corollary of the generic results (see, e.g., Corollary 1 in \cite{song2020sliced}). In fact, under \Cref{assum:diff_base_density}, \ref{assum:regular_Theta} and some additional regularity assumptions of the score function, we can show that $\sqrt{N} \prn[\big]{ \htheta - \thetas } \todist \mathcal{N}\prn*{0, \varXi}$, where $\varXi := \prn[\big]{ \nabla_{\theta}^2 \mathcal{L}_{\mathrm{sme}}(\thetas) }^{-1} \prn{\sum_{i,j} V_{i,j}} \prn[\big]{ \nabla_{\theta}^2 \mathcal{L}_{\mathrm{sme}}(\thetas) }^{-1}$ is an instance-specific covariance matrix specified therein. However, the proof is purely technical and is thus omitted here.
}

One may expect a similar characterization lemma for SME. However, due to the complexity in its formulation, it is impractical to write down the equations. Fortunately, for elliptical families, we can still apply \Cref{thm:lemma-diff_elliptical} to obtain a simple and neat characterization for SME parallel to \Cref{thm:MLE-character_elliptical}.

\begin{lemma}[SME characterization, elliptical family]\label{thm:SME-character_elliptical}
  For an elliptical family $\set{f_{A,\Sg}}$ with base density $\phi(w) = \psi(w^{\top} w)$, under \Cref{assum:diff_base_density} and the regularity assumption $\sum_{i=1}^{N} (\rho_i^2 + 2\tau_i) x_i x_i^{\top} \succ \zero$, the SME $(\hAsme, \hSsme)$ satisfies the following system of equations:
  \begin{equation}\label{eq:SME-character_elliptical}
    \left\lbrace \begin{array}{@{}l}
      \hAsme = \prn*{\sum_{i=1}^{N} (\rho_i^2 + 2\tau_i) x'_i x_i^{\top}} \prn*{\sum_{i=1}^{N} (\rho_i^2 + 2\tau_i) x_i x_i^{\top}}^{-1},\\
      \hSsme = - \frac{1}{N} \sum_{i=1}^{N} \frac{\rho_i^2 + 2\tau_i}{\rho_i} \prn[\big]{ x'_i - \hAsme x_i } \prn[\big]{ x'_i - \hAsme x_i }^{\top},
    \end{array} \right.
  \end{equation}
  where
  \begin{align*}
    \rho_i &= \rho\prn[\Big]{ \prn[\big]{ x'_i - \hAsme x_i }^{\top} \hSsme^{-1} \prn[\big]{ x'_i - \hAsme x_i } },\\
    \tau_i &= \tau\prn[\Big]{ \prn[\big]{ x'_i - \hAsme x_i }^{\top} \hSsme^{-1} \prn[\big]{ x'_i - \hAsme x_i } }
  \end{align*}
  are dataset-specific coefficients determined by the base density, where $\rho(z) = \frac{2\psi'(z)}{\psi(z)}$ and $\tau(z) = \frac{4 \prn{ \psi''(z) \psi(z) - \psi'(z)^2 }}{\psi^2(z)}$ (see \Cref{thm:lemma-diff_elliptical}).
\end{lemma}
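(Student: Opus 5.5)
The plan is to repeat the argument of \Cref{thm:MLE-character_general} and \Cref{thm:MLE-character_elliptical}: rewrite the per-sample SME loss in closed form using \Cref{thm:lemma-diff_elliptical}, then set the partial derivatives with respect to $A$ and the scale parameter to zero. Write $r_i := x'_i - A x_i$, $P := \Sg^{-1/2}$, $w_i := P r_i$ and $z_i := w_i^{\top} w_i = r_i^{\top}\Sg^{-1} r_i$. By \Cref{thm:lemma-diff_elliptical}, $P\g(w_i) = \rho(z_i) P^2 r_i$, so $\norm{P\g(w_i)}_2^2 = \rho(z_i)^2\, r_i^{\top}\Sg^{-2} r_i$. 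Likewise $\tr\prn{P^2\H(w_i)} = \rho(z_i)\tr(\Sg^{-1}) + \tau(z_i)\, r_i^{\top}\Sg^{-2} r_i$. Hence the SME objective \eqref{eq:SME-definition} becomes
\begin{equation*}
  \hLsme(A,\Sg) = \sum_{i=1}^{N} \Bigl[ (\rho_i^2 + 2\tau_i)\, r_i^{\top}\Sg^{-2} r_i + 2\rho_i \tr(\Sg^{-1}) \Bigr],
\end{equation*}
with $\rho_i = \rho(z_i)$ and $\tau_i = \tau(z_i)$. This has the same reweighted-quadratic structure as the MLE case, except that the weight $\rho_i$ is replaced by $\rho_i^2 + 2\tau_i$.

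For the $A$-equation, I will differentiate the quadratic part with the weights $c_i := \rho_i^2 + 2\tau_i$ held fixed. This gives $-2\Sg^{-2}\sum_i c_i r_i x_i^{\top} = \zero$. Since $\Sg \succ \zero$, this is equivalent to $\sum_i c_i (x'_i - \hAsme x_i) x_i^{\top} = \zero$. Under the regularity assumption $\sum_i c_i x_i x_i^{\top} \succ \zero$, this rearranges to the stated formula for $\hAsme$, exactly as in \Cref{thm:MLE-character_elliptical}. For the $\Sg$-equation, I will differentiate with respect to $\Sg^{-1}$, using the formulae \eqref{eq:matrix_diff-log_det} and \eqref{eq:matrix_diff-CAx} as the paper did for the MLE. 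The term $\sum_i c_i r_i^{\top}\Sg^{-2} r_i$ contributes a symmetrized term of the form $\Sg^{-1}\prn{\sum_i c_i r_i r_i^{\top}} + \prn{\sum_i c_i r_i r_i^{\top}}\Sg^{-1}$. The term $2\tr(\Sg^{-1})\sum_i\rho_i$ contributes a multiple of the identity. The goal is to match this stationarity condition with $\Sg = -\frac1N\sum_i \frac{c_i}{\rho_i} r_i r_i^{\top}$. To do so, I will rewrite the trace term through $z_i$ so that each sample contributes $r_i r_i^{\top}$ with coefficient $c_i/\rho_i$, and then cancel the common factor $\Sg^{-1}$.

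The main obstacle is the implicit dependence of the weights on the parameters: $\rho_i$ and $\tau_i$ depend on $(A,\Sg)$ through $z_i$. A full differentiation therefore produces extra terms. For example, in the $A$-derivative there is the additional term
\begin{equation*}
  -2\,\Sg^{-1}\sum_i \Bigl[ (2\rho_i\rho'(z_i) + 2\tau'(z_i))\, r_i^{\top}\Sg^{-2}r_i + 2\rho'(z_i)\tr(\Sg^{-1}) \Bigr] r_i x_i^{\top},
\end{equation*}
and an analogous term appears in the $\Sg$-derivative. My first attempt will be to check whether these terms can be absorbed into the same reweighted form, that is, whether they only change each sample's scalar coefficient multiplying $r_i x_i^{\top}$ (respectively $r_i r_i^{\top}$). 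If so, the statement is best read as a fixed-point characterization: the coefficients are evaluated at the estimator itself, which is how the iteratively reweighted formulas \eqref{eq:MLE-character_elliptical} and \eqref{eq:SME-character_elliptical} are used. If the extra terms do not reduce this way, I will make explicit that the characterization is the stationarity condition with the weights $\rho_i, \tau_i$ frozen at $(\hAsme,\hSsme)$. This is the interpretation under which the weighted-OLS structure holds verbatim. In that case I would flag that the identity requires this frozen-weight reading rather than full first-order optimality, since it is the step most likely to break.
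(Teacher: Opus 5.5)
There is a genuine gap, and the paper's own proof has the same one.

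\textbf{The $A$-equation.} Your route is the paper's route. The paper also substitutes \Cref{thm:lemma-diff_elliptical} to get $\hLsme=\sum_i\bigl[c_i\,r_i^{\top}\Sg^{-2}r_i+2\rho_i\tr(\Sg^{-1})\bigr]$ with $c_i:=\rho_i^2+2\tau_i$. It then differentiates in $A$ and in $\Sg^{-1}$ with $\rho_i,\tau_i$ held fixed. Your closed form and your $A$-derivative are correct.

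Your worry about the implicit dependence of the weights is justified, and the ``absorption'' attempt cannot succeed. The chain-rule term carries the prefactor $\Sg^{-1}$, while the main term carries $\Sg^{-2}$. Full stationarity in $A$ therefore reads $\sum_i\bigl(c_i\Sg^{-1}+e_iI\bigr)r_ix_i^{\top}=\zero$, with $e_i=c'(z_i)\,r_i^{\top}\Sg^{-2}r_i+2\rho'(z_i)\tr(\Sg^{-1})$. This is a matrix-weighted normal equation. Even for $d=1$ it yields scalar weights proportional to $c_i+z_ic'(z_i)+2\rho'(z_i)$, not $c_i$. So the first equation is only a frozen-weight fixed-point condition; it does not follow from optimality of the actual minimizer of $\hLsme$. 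The paper freezes the weights silently, and stating this explicitly, as you propose, is the correct thing to do.

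\textbf{The $\Sg$-equation.} This step fails even under the frozen-weight reading. The condition you write down is $\Sg^{-1}M+M\Sg^{-1}=-2\bigl(\sum_i\rho_i\bigr)I$ with $M:=\sum_i c_ir_ir_i^{\top}$. Since $\Sg^{-1}\succ\zero$, the map $X\mapsto\Sg^{-1}X+X\Sg^{-1}$ is invertible, so this forces
\[
  \hSsme=-\Bigl(\sum_{i=1}^{N}\rho_i\Bigr)^{-1}\sum_{i=1}^{N}(\rho_i^2+2\tau_i)\,r_ir_i^{\top}.
\]
That is not the stated $-\frac{1}{N}\sum_i\frac{c_i}{\rho_i}r_ir_i^{\top}$. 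In $d=1$ the two conditions read $\sum_i(c_iz_i+\rho_i)=0$ and $\sum_i(c_iz_i+\rho_i)/\rho_i=0$, respectively. They coincide when all $\rho_i$ are equal, as in the Gaussian case $\rho\equiv-1$, so \Cref{thm:equiv_estimators_Gaussian} is unaffected. They differ for, e.g., the Student-t family, where $\rho_i=-(\nu+d)/(\nu-2+z_i)$ varies with $i$.

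No rewriting of the trace term through $z_i=\tr(\Sg^{-1}r_ir_i^{\top})$ helps. The gradient of $\tr(\Sg^{-1})$ with respect to $\Sg^{-1}$ is $I$ however it is written, so per-sample coefficients $c_i/\rho_i$ cannot be produced.

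The paper's proof makes exactly this leap. It sums $2c_i\hSsme^{-1}r_ir_i^{\top}+2\rho_iI$ over $i$ and declares the result equivalent to the second equation. That would require each summand to vanish separately, which is impossible for $d\ge 2$, since a rank-one matrix is not a nonzero multiple of $I$. Hence the second equation cannot be established as stated. What your computation actually proves, with frozen weights and $\sum_i\rho_i\neq0$, is the pooled-weight formula displayed above.
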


\begin{proof}
  Plug \Cref{thm:lemma-diff_elliptical} into $\hLsme(A, \Sg)$, and we have
  \begin{align*}
    \hLsme(A, \Sg) &= \sum_{i=1}^{N} \rho_i^2 \norm[\big]{\Sg^{-1} (x'_i - Ax_i)}_2^2 + 2\rho_i \tr(\Sg^{-1}) \\
      &\hspace{2em}{}+ 2\tau_i \tr\prn[\Big]{ \Sg^{-3/2} (x'_i-Ax_i) (x'_i - Ax_i)^{\top} \Sg^{-1/2} } \\
    &= \sum_{i=1}^{N} (\rho_i^2 + 2\tau_i) \norm[\big]{\Sg^{-1} (x'_i - Ax_i)}_2^2 + 2\rho_i \tr(\Sg^{-1}).
  \end{align*}
  Therefore, to apply the first-order criteria, we take the partial derivative of $\hLsme$ with respect to $A$ to obtain
  \begin{equation*}
    \zero = \left. \frac{\partial \hLsme}{\partial A} \right\vert_{\hAsme , \hSsme} = 2 \hSsme^{-2} \sum_{i=1}^{N} (\rho_i^2+2\tau_i) \prn[\big]{ x'_i - \hAsme x_i } x_i^{\top},
  \end{equation*}
  which is equivalent to the first equation since $\hSsme \in \varTheta_{\Sg} \subseteq \S^d_{\succ \zero}$. Similarly, we take the partial derivative of $\hLsme$ with respect to $\Sg^{-1}$ (instead of $\varSigma$ for the sake of simplicity) to obtain
  \begin{align*}
    \zero &= \left. \frac{\partial \hLsme}{\partial \Sg^{-1}} \right\vert_{\hAsme, \hSsme} \\
    &= \sum_{i=1}^{N} 2(\rho_i^2 + 2\tau_i) \hSsme^{-1} \prn[\big]{ x'_i - \hAsme x_i } \prn[\big]{ x'_i - \hAsme x_i }^{\top} + 2\rho_i I,
  \end{align*}
  which is equivalent to the second equation. Note that here we apply formulae \eqref{eq:matrix_diff-log_det}, \eqref{eq:matrix_diff-tr_A} and \eqref{eq:matrix_diff-Ax+b}.
\end{proof}

We point out that the form of \eqref{eq:SME-character_elliptical} is very similar to that of \eqref{eq:MLE-character_elliptical}, both regarded as reweighted versions of the OLS formula.

\subsection{Unified Estimators for Gaussian Noise}\label{sec:3-4-1-Gaussian}

As briefly discussed above, it is intuitive to expect that the MLE and SME estimators coincide with the naive OLS estimator for the Gaussian family, which is indeed the case and can be easily verified as a sanity check in the following lemma.

\begin{lemma}[Unified Gaussian estimators]\label{thm:equiv_estimators_Gaussian}
  For the Gaussian family (i.e., one with base density $\phi(x) = \frac{1}{\sqrt{2\pi}} \exp\prn[\big]{ -\frac{x^2}{2} }$), under the regularity assumption that $\sum_{i=1}^{N} x_i x_i^{\top} \succ \zero$, the MLE and SME estimators both coincide with the naive OLS estimator, i.e.
  \begin{equation*}
    (\hAmle, \hSmle) = (\hAsme, \hSsme) = (\hAols, \hSols).
  \end{equation*}
\end{lemma}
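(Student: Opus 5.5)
The plan is to specialize the elliptical characterizations, \Cref{thm:MLE-character_elliptical} and \Cref{thm:SME-character_elliptical}, to the Gaussian case. The Gaussian base density is elliptical with $\rho_{\mathcal{N}}(z) \equiv -1$ and $\tau_{\mathcal{N}}(z) \equiv 0$. These weights are constants, independent of the data and of the estimated parameters, so the fixed-point systems \eqref{eq:MLE-character_elliptical} and \eqref{eq:SME-character_elliptical} collapse to explicit formulas. Because the weights are constant, the regularity conditions of both results reduce to the stated assumption $\sum_i x_i x_i^\top \succ \zero$: indeed $\sum_i \rho_i x_i x_i^\top = -\sum_i x_i x_i^\top$ and $\sum_i(\rho_i^2+2\tau_i)x_ix_i^\top = \sum_i x_ix_i^\top$.

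For the MLE, substitute $\rho_i = -1$ into \eqref{eq:MLE-character_elliptical}. The factors $-1$ cancel in the $A$-equation, giving $\hAmle = (\sum_i x'_i x_i^\top)(\sum_i x_i x_i^\top)^{-1} = \hAols$ by \eqref{eq:estimator-OLS}. The $\Sg$-equation becomes $\hSmle = \frac1N\sum_i (x'_i-\hAmle x_i)(x'_i-\hAmle x_i)^\top$, which equals $\hSols$ once $\hAmle=\hAols$ is substituted.

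For the SME, substitute $\rho_i^2+2\tau_i = 1$ and $(\rho_i^2+2\tau_i)/\rho_i = -1$ into \eqref{eq:SME-character_elliptical}. This yields the same two formulas, so $(\hAsme,\hSsme) = (\hAols,\hSols)$.

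The main obstacle is that both characterizations only give first-order stationarity conditions, so we must also confirm that the stationary point is the actual optimizer.

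For the MLE, I will argue directly. The Gaussian log-likelihood is $-\frac{1}{2N}\sum_i \|\Sg^{-1/2}(x'_i-Ax_i)\|^2 - \frac12\log\det\Sg + \text{const}$. For any fixed $\Sg$ this is maximized in $A$ by the OLS solution, since it is a $\Sg^{-1}$-weighted least squares whose minimizer is independent of the weight. The profile likelihood in $\Sg$ is then the standard Gaussian covariance likelihood, which is uniquely maximized at the sample covariance when that covariance is positive definite, so the stationary point is the unique global maximizer.

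For the SME, the reduced loss from the proof of \Cref{thm:SME-character_elliptical} becomes $\sum_i \|\Sg^{-1}(x'_i-Ax_i)\|^2 - 2N\tr(\Sg^{-1})$. In $A$ it is a convex quadratic minimized at $\hAols$, independently of $\Sg$. Writing $P=\Sg^{-1}$ and $S=\frac1N\sum_i r_ir_i^\top$ with $r_i = x'_i-\hAols x_i$, the profile loss is $N[\tr(PSP) - 2\tr P]$. This is strictly convex in $P$ when $S\succ\zero$, so it has the unique minimizer $P=S^{-1}$, giving $\Sg = S = \hSols$.

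This also shows the stationarity equations have a unique solution, which settles the equality of all three estimators. It requires $\hSols\succ\zero$, a mild genericity condition that I will note alongside the stated regularity assumption.
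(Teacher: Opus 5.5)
Your proposal is correct. Its first half is exactly the paper's proof, which consists only of substituting $\rho\equiv-1$, $\tau\equiv 0$ into \Cref{thm:MLE-character_elliptical} and \Cref{thm:SME-character_elliptical}; the paper stops there.

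Your second half is a genuinely different, self-contained route that never uses the characterizations. You show that $\hAols$ minimizes the $\Sg^{-1}$-weighted (MLE) or $\Sg^{-2}$-weighted (SME) residual sum uniformly in $\Sg$. You then solve the profile problem in $\Sg$; for the SME this uses $\tr(PSP)-2\tr P=\tr\bigl((P-S^{-1})S(P-S^{-1})\bigr)-\tr(S^{-1})$.

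The obstacle you name is slightly mis-stated. Stationarity equations whose only solution is the OLS pair already force any interior optimizer to coincide with OLS, so the paper's one-liner does give that identification. What it silently presupposes is \emph{existence} of the optimizer. That is precisely what your argument adds, together with the condition $\hSols\succ\zero$.

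That condition is genuinely extra, and it is also necessary. It does not follow from $\sum_i x_ix_i^\top\succ\zero$: for $N=d$ the OLS residuals vanish. When $\hSols$ is singular, the Gaussian likelihood is unbounded above and $\hLsme$ is unbounded below, so neither estimator exists.

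Your direct route also avoids two soft spots in the cited lemmas. First, the SME characterization is derived by differentiating as if $\rho_i,\tau_i$ were constants, which is legitimate here only because they are. Second, the MLE corollary's hypothesis $\sum_i\rho_i x_ix_i^\top\succ\zero$ literally becomes $-\sum_i x_ix_i^\top\succ\zero$ for the Gaussian. So your remark that it ``reduces to the stated assumption'' is right only if that hypothesis is read as invertibility.
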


\begin{proof}
  Note that $\rho(z) \equiv -1$ and $\tau(z) \equiv 0$, and the results follows from \Cref{thm:MLE-character_elliptical} and \Cref{thm:SME-character_elliptical} immediately.
\end{proof}

The above lemma also serves as a piece of supporting evidence that the MLE and SME estimators are supposed to outperform the naive OLS estimator by leveraging the shape information implicitly encoded in the $\phi$-density family. On the contrary, since Gaussian distribution is the max-entropy distribution of prescribed mean and covariance, the estimators naturally degenerate into the naive OLS estimators due to a lack of additional structural information.

\subsection{Numerical Implementations}\label{sec:3-4-2-numerical}

In this section, we briefly discuss how to numerically compute the proposed OLS, MLE and SME estimators efficiently.

\boldtitle{OLS.} The OLS estimator can be computed either using the closed-form formula in \eqref{eq:estimator-OLS}, or by solving the optimization in \eqref{eq:estimator-OLS-A_optim} using a generic gradient-based solver, where the latter may be preferred for large datasets to avoid numerical issues.

\boldtitle{MLE and SME.} In general, the MLE and SME estimators should be computed by solving the optimization formulated in \eqref{eq:MLE-definition} and \eqref{eq:SME-definition}, respectively, using a generic gradient-based solver.

Alternatively, since MLE and SME for elliptical families are known to be reweighted OLS estimators in the form
\begin{equation*}
  \left\lbrace \begin{array}{@{}l}
    \hA = \prn*{\sum_{i=1}^{N} \lambda_i(\hA, \hS) x'_i x_i^{\top}} \prn*{\sum_{i=1}^{N} \lambda_i(\hA, \hS) x_i x_i^{\top}}^{-1},\\
    \hS = - \frac{1}{N} \sum_{i=1}^{N} \kappa_i(\hA, \hS) \prn[\big]{ x'_i - \hA x_i } \prn[\big]{ x'_i - \hA x_i }^{\top},
  \end{array} \right.
\end{equation*}
with weight functions $\lambda_i(\cdot)$ and $\kappa_i(\cdot)$ determined by data $(x_i, x'_i)$, we may apply the following iterative algorithm that produces a sequence $\set{(\hA_k, \hS_k) \mid k=1,2,\ldots}$, where
\begin{equation*}
  \left\lbrace \begin{array}{@{}l}
    \hA_{k+1} = \prn*{\sum_{i=1}^{N} \lambda_i(\hA_k, \hS_k) x'_i x_i^{\top}} \prn*{\sum_{i=1}^{N} \lambda_i(\hA_k, \hS_k) x_i x_i^{\top}}^{-1},\\
    \hS_{k+1}  = - \frac{1}{N} \sum_{i=1}^{N} \kappa_i(\hA_k, \hS_k) \prn[\big]{ x'_i - \hA x_i } \prn[\big]{ x'_i - \hA x_i }^{\top}.
  \end{array} \right.
\end{equation*}
It can be verified that the above iterative algorithm converges to the optimum for certain estimators and certain distributions (for example, MLE for the multivariate Student-t distribution, as shown in \Cref{sec:4-simulations} below), but a general convergence guarantee is hard to establish. Curious readers are referred to Chapter 13 of \cite{bilodeau1999theory} for a set of sufficient conditions that guarantee uniqueness and convergence of the iterative algorithm in the case without dynamics (i.e., the location parameter is simply $\mu$ instead of $Ax_i$). We leave the detailed analysis of this algorithm as future work.

\subsection{Single-trajectory Sample Complexity}

In this section, we present a single-trajectory sample complexity analysis for the proposed MLE and SME estimators in the case where the system is stable and the noise is sub-Gaussian.

\begin{assumption}[Single-trajectory data]\label{assum:data-single_traj}
  $\D$ contains transitions collected from a single trajectory; i.e., we have $\D = \set{(x_i, x_{i+1}) \mid t \in [N]}$ for a trajectory $x_0, x_1, x_2, \ldots$ subject to the dynamics $x_i = \As x_{i-1} + \Ss^{1/2} w_{i-1}$, $\forall i \in \N$.
\end{assumption}

\begin{assumption}[Sub-Gaussian noise]\label{assum:sub_gaussian_noise}
  The noise $w_i$ is i.i.d. sub-Gaussian with \emph{variance proxy} $\sigma^2$ (see \Cref{def:sub_gaussian}).
\end{assumption}

\begin{assumption}[Stability]\label{assum:stability}
  The system is stable in the sense that $\rho(\As) < 1$, and that the Gramian $\varGamma_t(A) := \sum_{\tau=0}^{t-1} A^{\tau} (A^{\tau})^{\top}$ is bounded by $\varGamma$. Consequently, $\norm{x_i} \leq C_x$, $\forall i \in [N]$.
\end{assumption}

\begin{assumption}[Bounded $\varTheta$]\label{assum:bounded_Theta}
  The hypothesis space $\varTheta$ is bounded such that $\norm{A} \leq C_A$, $\forall A \in \varTheta_A$ and $\norm{\Sg^{-1/2}} \leq C_{\Sg}$, $\forall \Sg \in \varTheta_{\Sg}$.
\end{assumption}

\begin{theorem}[Sample Complexity]\label{thm:main-sample_complexity}
  Under Assumptions \ref{assum:diff_base_density}, \ref{assum:regular_Theta}, \ref{assum:data-single_traj}, \ref{assum:sub_gaussian_noise}, \ref{assum:stability} and \ref{assum:bounded_Theta}, when $(\hA, \hS) \in \set[\big]{(\hAmle, \hSmle), (\hAsme, \hSsme)}$, we have with probability at least $1-\delta$
  \begin{align*}
    \norm*{\hA - \As} &\lesssim \sqrt{\frac{d \log (d/\delta)}{N}},\\
    \norm*{\hS - \Ss} &\lesssim \sqrt{\frac{d \log (dN/\delta) \log(1/\delta)}{N}}.
  \end{align*}
  Here the asymptotics hide instance-specific constants that do not depend on $d$, $N$ or $\delta$.
\end{theorem}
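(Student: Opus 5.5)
The plan is the standard M-estimation route: a uniform deviation bound for the empirical loss, converted into a parameter error bound through local curvature of the population loss at $\thetas$. Write $\theta=(A,\Sg)$ and let $\hL_N$ denote either $-\hLmle$ or $\frac1N\hLsme$, with population counterpart $\L$ (i.e. $-\Lmle$ or $\Lsme$). By \Cref{thm:MLE-unbiased} and \Cref{thm:SME-unbiased}, $\thetas$ is the unique minimizer of $\L$ over $\varTheta$. Since $\thetas$ is interior (\Cref{assum:regular_Theta}) and $\phi$ is $C^2$ (\Cref{assum:diff_base_density}), the Hessian $\nabla^2_\theta\L(\thetas)$ is positive semidefinite. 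I will additionally need it to be positive definite, with smallest eigenvalue $\mu>0$; this is an instance-specific constant (a Fisher information for the MLE, the score-matching curvature for the SME), and I will treat it as one of the constants hidden in $\lesssim$. Combining this with compactness of $\varTheta$ (\Cref{assum:bounded_Theta}) and uniqueness of the minimizer gives a quadratic growth bound $\L(\theta)-\L(\thetas)\gtrsim \mu\norm{\theta-\thetas}^2$ near $\thetas$, and a positive gap away from $\thetas$. Then $\hL_N(\htheta)\le \hL_N(\thetas)$ gives
\[
\mu\norm{\htheta-\thetas}^2 \lesssim \bigl(\hL_N-\L\bigr)(\thetas)-\bigl(\hL_N-\L\bigr)(\htheta).
\]

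A cruder uniform bound would lose the $\sqrt{1/N}$ rate, so I will localize and work with the gradient. A first-order expansion reduces the problem to bounding $\norm{\nabla\hL_N(\thetas)}$, plus a uniform Hessian deviation on a small ball. At $\thetas$ the residuals are $\Ss^{-1/2}(x_{i+1}-\As x_i)=w_i$. The $A$-block of the gradient is therefore $\frac1N\sum_i \Ss^{-1/2}\g(w_i)x_i^\top$ for the MLE, and a similar $\g,\H$-polynomial for the SME. Each summand has mean zero conditionally on $\mathcal F_{i}=\sigma(w_0,\dots,w_{i-1})$: for the MLE because $\E\g(w)=0$, and for the SME by the integration-by-parts identity. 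Since $x_i$ is $\mathcal F_i$-measurable, this is a martingale sum. Using sub-Gaussianity (\Cref{assum:sub_gaussian_noise}) and $\norm{x_i}\le C_x$ (\Cref{assum:stability}), I will apply a vector/matrix Azuma--Freedman bound together with an $\epsilon$-net over the unit sphere in $\R^d$. This gives $\norm{\nabla_A\hL_N(\thetas)}\lesssim\sqrt{d\log(d/\delta)/N}$, which is the $A$-rate.

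For the $\Sg$-block the summands are $\g(w_i)w_i^\top+I$, or their SME analogues, which are quadratic in $w_i$ and so only sub-exponential. I will truncate at level $\norm{w_i}\le\sigma\sqrt{\log(N/\delta)}$, which holds for all $i$ simultaneously with probability $1-\delta$ by a union bound. On that event I apply Bernstein for matrix martingales. The truncation contributes the $\log(dN/\delta)$ factor, and the deviation term contributes the extra $\log(1/\delta)$, matching the stated bound for $\norm{\hS-\Ss}$. The same truncation gives uniform control of $\nabla^2\hL_N$ near $\thetas$. Because $\g,\H$ are continuous on the truncated ball and $\norm{\Sg^{-1/2}}\le C_{\Sg}$, the Hessian is Lipschitz there, and its concentration around $\nabla^2\L$ holds at the same rate. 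Hence $\nabla^2\hL_N\succeq\frac{\mu}{2}I$ on the ball, and the mean-value theorem yields $\norm{\htheta-\thetas}\le\frac{2}{\mu}\norm{\nabla\hL_N(\thetas)}$.

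The main obstacle is establishing this local strong convexity uniformly, and first showing that $\htheta$ actually lands in the small ball (consistency with high probability). The data are dependent and the losses involve a general $\phi$. For consistency I will use the identifiability gap from \Cref{assum:regular_Theta} together with a uniform law of large numbers over the compact $\varTheta$ via covering. For the dependence, I will use mixing of the stable chain (\Cref{assum:stability}), or equivalently the martingale structure applied with a net over $\varTheta$. I expect the Hessian-Lipschitz constants, which depend on the growth of $\g$ and $\H$, to require the truncation argument, so they become instance-specific constants absorbed into $\lesssim$.
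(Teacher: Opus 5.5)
Your route differs from the paper's. The paper never expands the loss around $\thetas$.

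For $A$, the paper starts from the reweighted-OLS characterizations (\Cref{thm:MLE-character_elliptical}, \Cref{thm:SME-character_elliptical}) and assumes the weights lie in $(\lolambda,\hilambda)$ (\Cref{assum:apdx-reweighted_OLS}). It pulls the weights out to compare $\|\hA-\As\|$ with $\tfrac{\hilambda}{\lolambda}$ times the plain OLS error. It then quotes the Sarkar--Rakhlin excitation and self-normalized bounds (\Cref{thm:lemma-auto_concentration_XXT}, \Cref{thm:lemma-auto_concentration_upper}).

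For $\Sg$, it plugs a pointwise bound on $|\hL-\L|$ (\Cref{thm:concentration_L}) into the basic inequality. It converts the loss gap into $\|\hS-\Ss\|^2$ through an assumed global $\mu$-strong convexity (\Cref{assum:apdx-loss_condition}).

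That route is short and reuses off-the-shelf OLS tools. It is, however, tied to elliptical families and strong extra hypotheses. It is also fragile exactly where you are careful:
\begin{itemize}
\item The weights depend on $(\hA,\hS)$, hence on the whole trajectory. So $\sum_i\lambda_i\tw_ix_i^\top$ is not a martingale, and a signed sum is not controlled by $\hilambda$ times the unweighted one.
\item At a fixed $\theta$, $\hL-\L$ fluctuates at order $N^{-1/2}$. That is what Hoeffding gives, not the $1/N$ recorded in \Cref{thm:concentration_L}. A non-localized basic inequality therefore only yields $N^{-1/4}$.
\end{itemize}
Your localization avoids both problems, because the score at $\thetas$ is a genuine martingale. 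For the MLE, the paper's residual $\sum_i\rho_i\tw_ix_i^\top$ evaluated at the truth is exactly $\Ss^{1/2}\sum_i\g(w_i)x_i^\top$, i.e.\ your score. Your approach also covers general $\phi$ and replaces global strong convexity by nondegenerate curvature at $\thetas$. The price is a ULLN for dependent data and uniform Hessian control.

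That said, your final conversion step does not deliver the stated rates as written. Since $\mu$ is a Hessian eigenvalue, the mean-value step gives $\|\htheta-\thetas\|_F\le\tfrac{2}{\mu}\|\nabla\hL_N(\thetas)\|_F$, which is a joint Euclidean bound. Two things go wrong.
\begin{itemize}
\item[(i)] The $\Sg$-block of the score, with your extra logarithms, leaks into $\|\hA-\As\|$. Bounding $\nabla_A\hL_N(\thetas)$ alone therefore does not establish the $A$-rate.
\item[(ii)] Your Azuma/Bernstein bounds are operator-norm bounds. In Frobenius norm, the $\Sg$-score $\tfrac1N\sum_i(\g(w_i)w_i^\top+I)$ is of order $d/\sqrt N$, already for Gaussian $\phi$. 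Done consistently, you get $d\sqrt{\log(\cdot)/N}$ rather than $\sqrt{d\log(\cdot)/N}$, for both blocks.
\end{itemize}
The repair is to invert the curvature blockwise in operator norm.
\begin{itemize}
\item Cross block: both losses depend on $A$ only through $\Sg^{-1/2}(x'-Ax)$. So the cross block at $\thetas$ has conditional mean of the form $c\,x_i^\top$. It averages to $O(\|\tfrac1N\sum_i x_i\|)$ plus fluctuations and contributes only lower-order terms.
\item $A$-block: to leading order it is $\Delta\mapsto\mathcal J\Delta\,\tfrac1N\sum_ix_ix_i^\top$ for a fixed $\mathcal J\succ0$. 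So $\hA-\As$ reduces to $\mathcal J^{-1}\bigl(\sum_iv(w_i)x_i^\top\bigr)\bigl(\sum_ix_ix_i^\top\bigr)^{-1}$ with mean-zero $v(w_i)$ (for the MLE, $v=\Ss^{-1/2}\g$). The Sarkar--Rakhlin self-normalized bounds handle this term.
\item $\Sg$-block: you must also assume it has an operator-norm-bounded inverse. Then use matrix Bernstein, where the truncation level enters only the lower-order term $O\bigl((d+\log(N/\delta))\log(d/\delta)/N\bigr)$. This gives the stated $\Sg$ bound a fortiori once $N$ exceeds a burn-in.
\end{itemize}

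Smaller points:
\begin{itemize}
\item Sub-Gaussianity of $w$ says nothing about $\g(w)$ or $\H(w)$, so your martingale and Hessian bounds need a growth condition. For the MLE, the $L$-smoothness in \Cref{assum:apdx-loss_condition} gives $\|\g(w)\|\le\|\g(0)\|+L\|w\|$ and $\|\H\|\le L$.
\item For the SME, $\hLsme$ already contains $\H$. Its $\theta$-gradient and $\theta$-Hessian therefore involve third and fourth derivatives of $\log\phi$, beyond the $C^2$ of \Cref{assum:diff_base_density}.
\item Since $\mathbb{E}\|w\|^2=d$, your truncation radius must be $\sigma(\sqrt d+\sqrt{\log(N/\delta)})$ up to constants, not $\sigma\sqrt{\log(N/\delta)}$.
\item \Cref{assum:bounded_Theta} bounds $\Sg$ only from below. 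So $\varTheta$ is not compact, and large $\Sg$ must be ruled out separately in your consistency step.
\end{itemize}
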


\begin{proof}[Proof sketch.]
  Given the reweighted OLS characterization for MLE and SME estimators (see \Cref{thm:MLE-character_elliptical} and \ref{thm:SME-character_elliptical}), we shall first apply known self-normalized martingale concentration bounds (see \cite{abbasi2011improved}) to derive the convergence rate of $A$. To further derive the convergence rate of $\Sg$, we may resort to the concentration of $\hL$ around $\L$, as well as the smoothness and strong convexity of the objective function over a bounded hypothesis space.
  
  Detailed proofs can be found in \ref{sec:apdx-sample_complexity}.
\end{proof}

The sample complexity guarantee shows that the proposed estimators achieves a convergence rate of $\tilde{\mathrm{O}}(N^{-1/2})$, which is identical to standard MLE estimators.

%% file: 04-simulations.tex
\section{Simulations}\label{sec:4-simulations}

In this section, we examine the numerical performance of the proposed MLE and SME estimators via simulation. We start by comparing the identification error under different sample sizes to provide empirical evidence for the sample complexity results. Then we compare the identification error under different system dimensions to showcase the scalability of the algorithms. We also evaluate the proposed estimators in terms of computation time to reflect their computational complexity, highlighting a trade-off between sample and computational complexity.

\boldtitle{Setting.} We consider the \emph{Student-t family} with base density $\phi_{\mathsf{t}}(w)$ selected as the multivariate Student-t distribution, as specified in \Cref{eg:student-t}. For simulations in a 2-d system with different sample sizes, the following system parameters are used:
\begin{equation*}
  \As = \begin{bmatrix}
    1 & 2 \\
    & \frac{1}{2}
  \end{bmatrix},\qquad
  \Ss = \begin{bmatrix}
    1 \\
    & 4
  \end{bmatrix}.
\end{equation*}
For simulations on scalability with different system dimensions, the dynamical matrix $\As$ is a tri-diagonal matrix, while $\Ss$ is simply selected as identity, i.e.
\begin{equation*}
  \As = \begin{bmatrix}
    1 & 1 \\
    1 & 1 & 1 \\
    & 1 & 1 & 1 \\
    & & \sddots & \sddots & \sddots
  \end{bmatrix}_{d \times d},\qquad
  \Ss = I_d.
\end{equation*}
All optimizations are solved by the general-purpose nonlinear solver provided by SciPy \cite{virtanen2020scipy} using the BFGS algorithm \cite{fletcher2000practical}. Experiments are repeated for 100 times to compute the median and confidence interval (marked with shadow).

\boldtitle{Sample Complexity.} To empirically evaluate the sample complexity of the MLE, SME and naive OLS estimators, we first compare their identification error in the 2-d system specified above. It is shown in \Cref{fig:error-sample_size} below that the estimation errors of all estimators converge to 0 as the number of samples grow, yet at different rates and displaying different patterns. When examining the curves closely, it can be observed that MLE has the smallest identification error in all settings, both in terms of the median and the confidence interval; further, while the estimation error of $A$ convergence at similar rates for all the three estimators, their performance for identifying $\Sg$ differs significantly, with OLS the worst, MLE the best, and SME in between. These observations match well with our initial motivation that better leveraging the ``shape'' information of the distribution family helps to improve the identification performance.

\begin{figure}[t]
    \centering
    \includegraphics[width=0.98\linewidth]{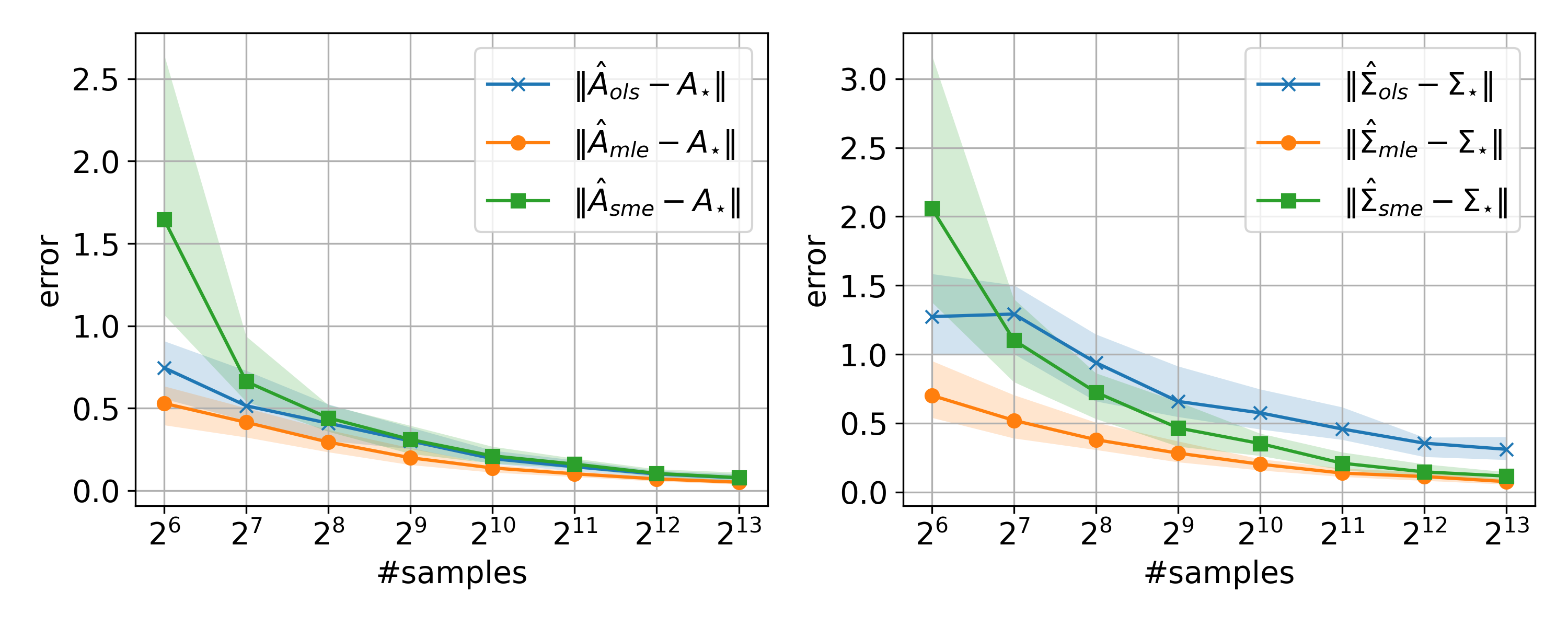}
    \caption{Identification error of MLE, SME and OLS estimators for different sample sizes (2-d Student-t family, log scale, shadowed confidence interval).}
    \label{fig:error-sample_size}
\end{figure}

\boldtitle{Scalability.} To reveal the scalability of the proposed estimators, we also measure the estimation error of the MLE and OLS estimators with different system dimensions using a fixed number of samples (256 in implementation), with SME excluded since the solver fails to output solutions for higher dimensions within reasonable time. As shown in \Cref{fig:error-dimension} below, it can be observed that the advantage of MLE over the naive OLS estimator becomes more significant as the system dimension, which can in turn be attributed to the increased dissimilarity of the Student-t family from the Gaussian family in higher dimensions.

\begin{figure}[t]
    \centering
    \includegraphics[width=0.98\linewidth]{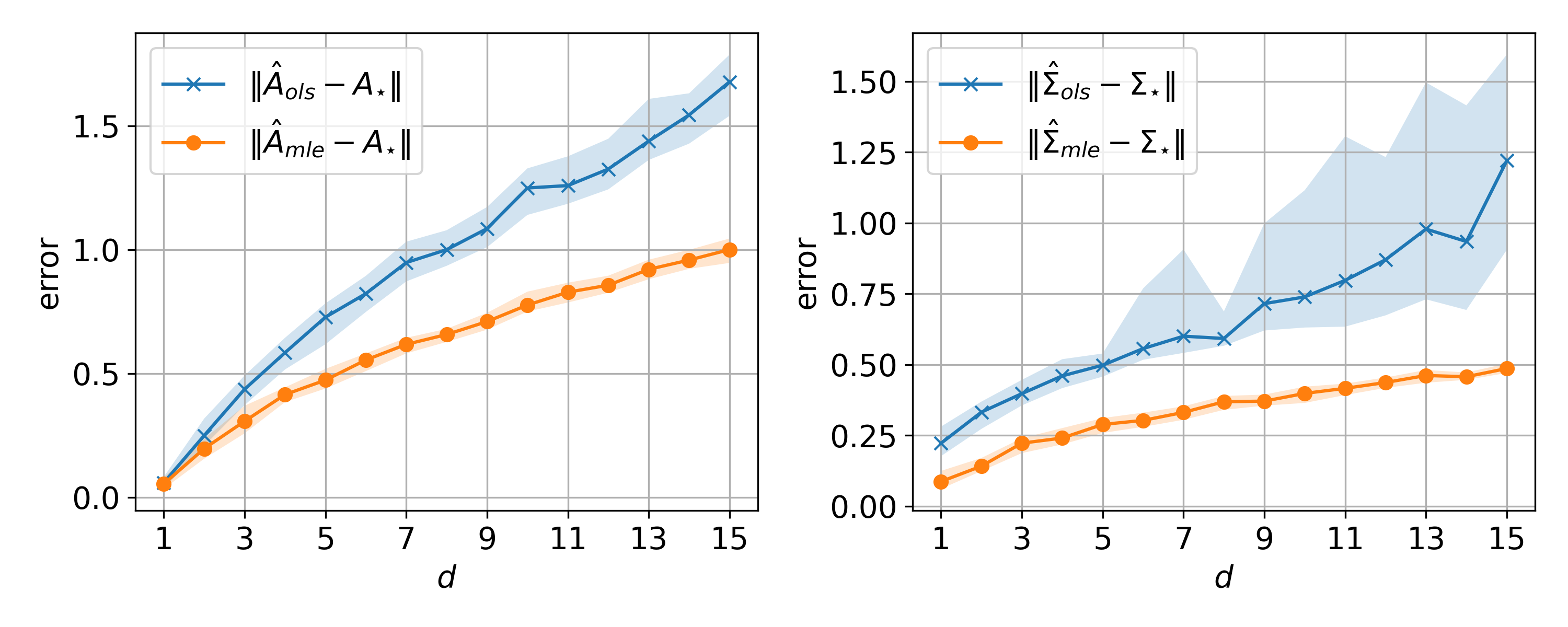}
    \caption{Identification error of MLE and OLS estimators for different dimensions (256 samples, log scale, shadowed confidence interval).}
     \label{fig:error-dimension}
\end{figure}

\boldtitle{Computational Efficiency.} We proceed to evaluate the computational efficiency of the proposed estimators under different sample sizes and different system dimensions. For this purpose, we include two different implementations of the MLE estimator---the standard implementation by solving the optimization, and the iterative implementation (labeled as ``\textsf{iter}'') by applying the iterative algorithm proposed in \Cref{sec:3-4-2-numerical}. It is shown in \Cref{fig:error-sample_size_iterMLE} that the iterative algorithm is able to approximate the MLE to a very high precision after only 20 iterations.

The average computation time are then measured and shown in \Cref{fig:wall_time}. It can be observed in \Cref{fig:wall_time-sample_size} that the computation of MLE and SME is significantly less efficient than the naive OLS estimator, where the latter comes with a closed-form formula and thus does not involve any optimization. Further, the computation time for MLE is reduced by approximately an order of magnitude when the iterative algorithm is applied, justifying the efficiency of the iterative algorithm. A similar slowdown of MLE computation (as compared to the naive OLS baseline) is observed in \Cref{fig:wall_time-dimension}, which deteriorates with higher system dimensions. We point out again that SME is not included in \Cref{fig:wall_time-dimension} because the solver simply fails to minimize $\hLsme$ in higher dimensions, probably due to the highly non-convex landscape of the optimization problem. A comparison of \Cref{fig:error-dimension} and \Cref{fig:wall_time-dimension} reveals a clear trade-off between sample complexity and computational efficiency.

\begin{figure}[t]
    \centering
    \includegraphics[width=0.98\linewidth]{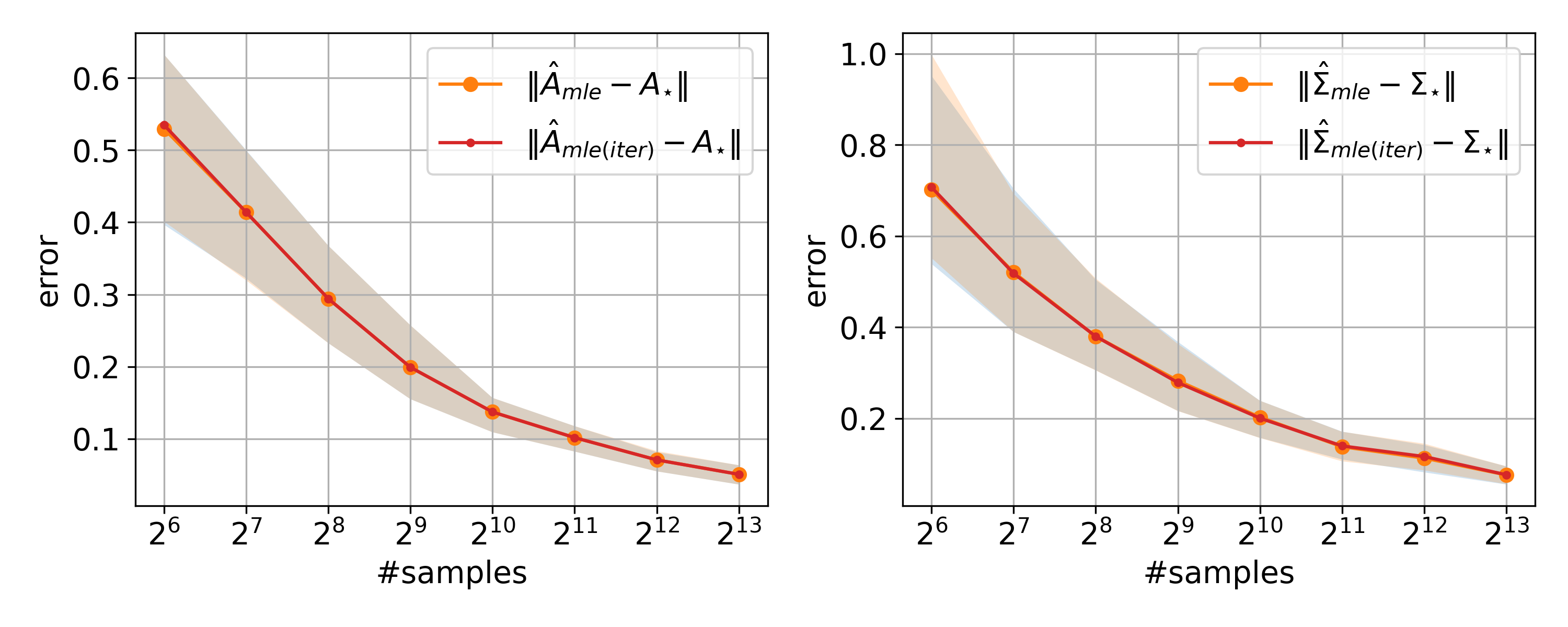}
    \caption{Identification error of different implementations of MLE for different sample sizes (2-d Student-t family, log scale, shadowed confidence interval).}
     \label{fig:error-sample_size_iterMLE}
\end{figure}

\begin{figure}[t]
    \centering
    \begin{subfigure}[b]{0.49\linewidth}
      \centering
      \includegraphics[width=0.98\linewidth]{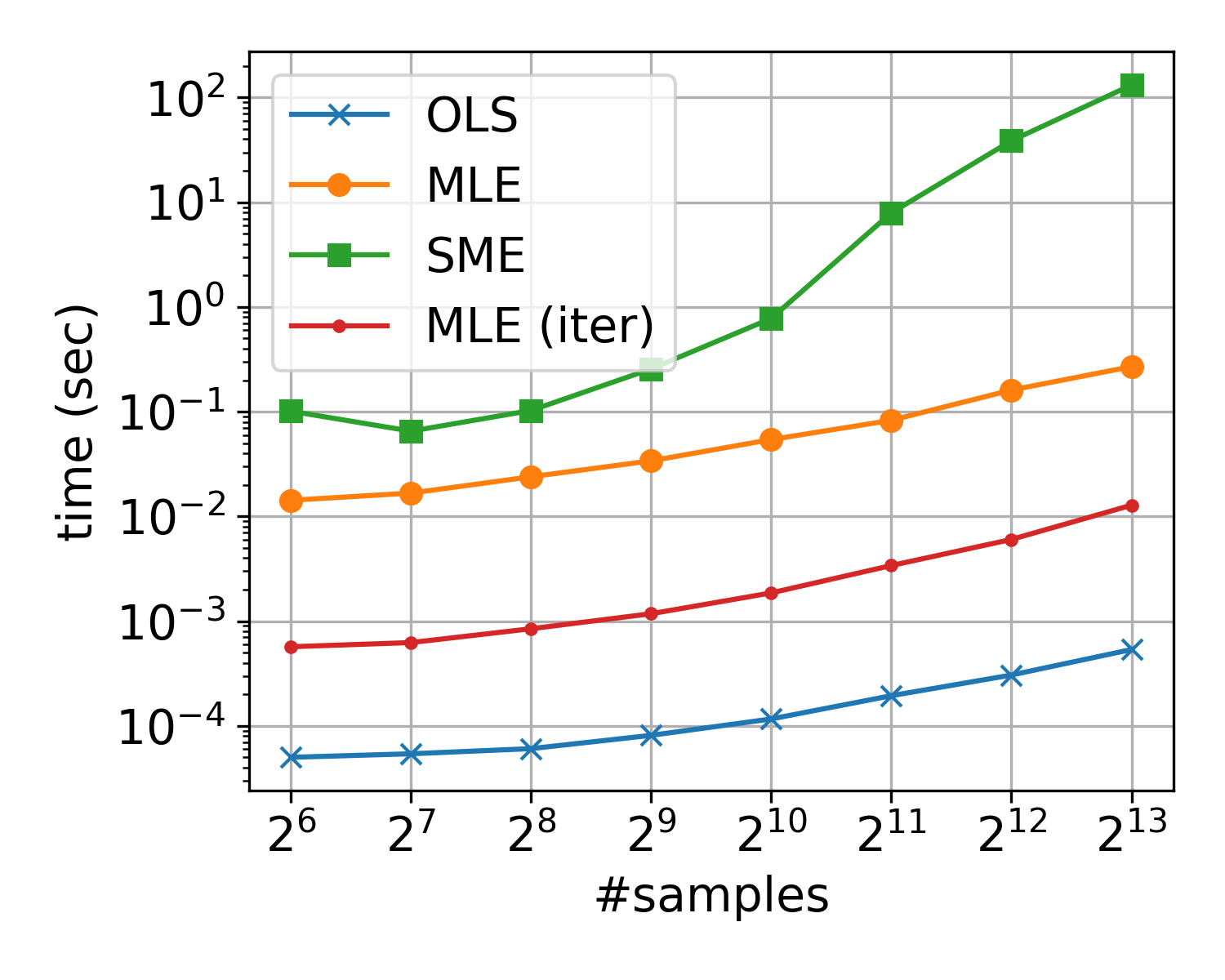}
      \caption{2-d system, different sample sizes}
      \label{fig:wall_time-sample_size}
    \end{subfigure}
    \begin{subfigure}[b]{0.49\linewidth}
      \centering
      \includegraphics[width=0.98\linewidth]{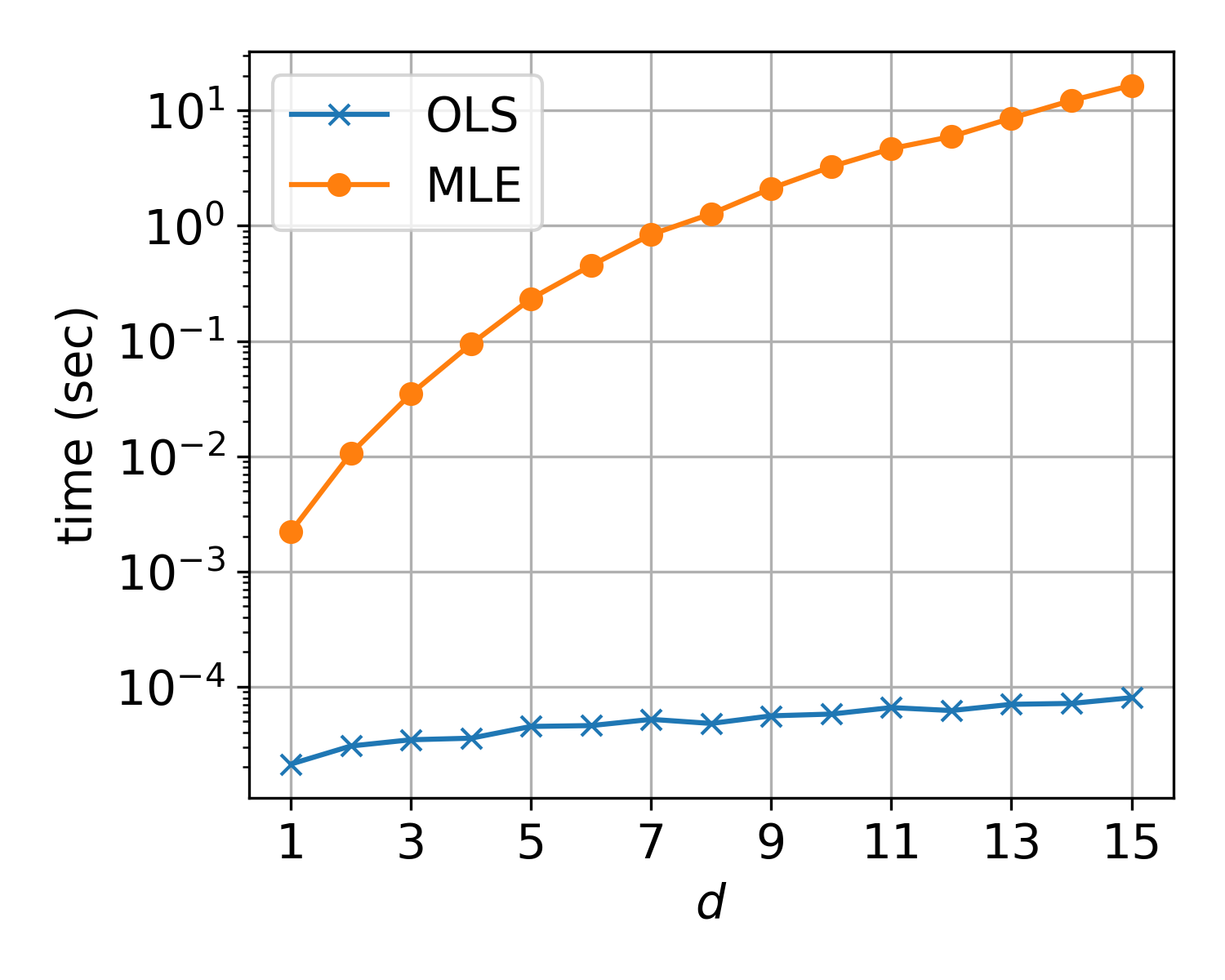}
      \caption{256 samples, different system dimensions}
      \label{fig:wall_time-dimension}
    \end{subfigure}
    \caption{Average computation time of MLE, SME and OLS estimators for different dimensions (Student-t family, log scale).}
    \label{fig:wall_time}
\end{figure}

\rev{
\boldtitle{Mis-specified Base Density.} Finally, we empirically illustrate the proposed estimators' robustness with respect to mis-specified base densities. Consider the case where the student-t density is perturbed, such that the $w^{\top} w$ term is replaced by $(w^{\top} w)^{1+\varepsilon}$, i.e.
\begin{equation*}
  \log \tilde{\phi}_{\mathsf{t}}^{\varepsilon} (w) = - \frac{\nu+d}{2 Z_{\varepsilon}} \log\left( 1 + \frac{1}{\nu-2} (w^{\top} w)^{1+\varepsilon} \right),
\end{equation*}
where $Z_{\varepsilon}$ is an appropriate normalizing factor for error level $\varepsilon$. The estimation errors at different error levels are plotted in the following \Cref{fig:perturbation}. It can be observed that $\hAmle$ slightly outperforms $\hAols$ in a consistent manner, while the prediction error of $\hSmle$ gradually deteriorates with larger $\varepsilon$, and eventually becomes worse than $\hSols$.
}

\rev{
Such deterioration in performance is quite intuitive, and also matches with our theoretical analysis. In fact, when the error $\norm{\tilde{\phi}(w) - \phi(w)}$ is supposed to be bounded, following the derivations presented in \ref{sec:apdx-sample_complexity}, we will see an additional drift term in the prediction error bound that positively correlates to the error level $\varepsilon$. The above discussion shows that the proposed estimators exhibit certain level of robustness to mis-specified base densities when the error level is mild.
}
    
\begin{figure}[h]
  \centering
  \includegraphics[width=0.98\linewidth]{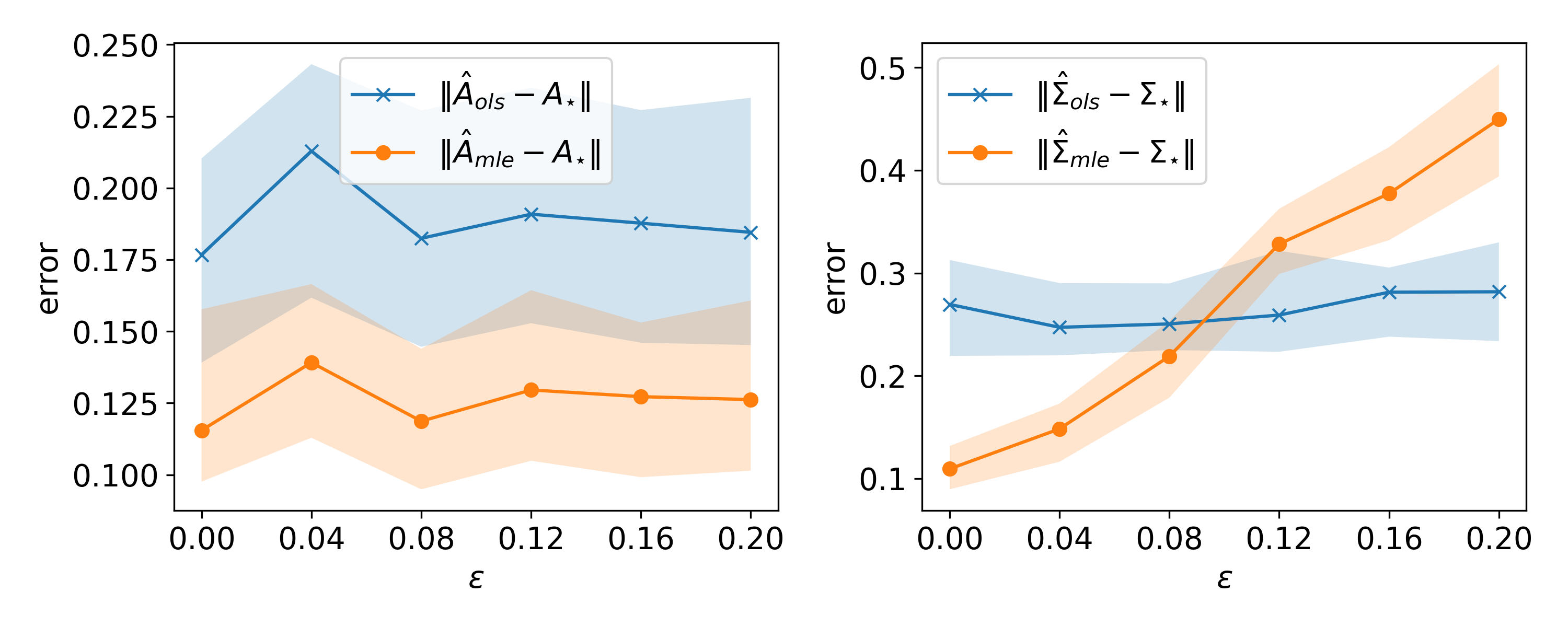}
  \caption{Identification error of MLE estimators for mis-specified base densities at different error levels (2-d perturbed Student-t family, log scale, shadowed confidence interval).}\label{fig:perturbation}
\end{figure}

%% file: 05-backmatters.tex
\section{Conclusion}

In this paper, we study sample-efficient joint identification of linear dynamics and noise covariance for generic noise distributions via a novel parameterization of the state-transition distribution. Building upon both theoretical and experimental evidence, we highlight the necessity to leverage the ``shape'' information of distributions that is implicitly encoded in the $\phi$-density families, and propose two practical estimators (i.e., MLE and SME) to solve the joint identification problem, along with characterization of their key properties and a sample complexity guarantee. Simulation results confirms the outperformance of the proposed estimators against the naive OLS baseline at higher computational costs, showing a trade-off between sample complexity and computational efficiency. Future work includes discovering better estimators to account for a wider range of distribution families and designing more efficient algorithms to compute the estimators.

\section*{Acknowledgement}

This paper is supported by the following grants: NSF ECCS 2328241 (ASCENT), NSF CBET 2112085 (AI Institute), NSF ECCS 2401390, ONR N000142512173.

%% file: A1-appendix.tex
\medskip
\boldtitle{Notations.} Let $\S^n$ denote the set of all $n$-by-$n$ symmetric matrices, where subscripts indicate positive-(semi)definiteness. Let $A \succ B$ ($A \succeq B$) be a shorthand for denoting $A-B$ to be positive (semi)definite. Let $\norm{\cdot}$ denote the Euclidean 2-norm for vectors and the induced 2-norm for matrices. Let $\E{X}$ and $\Var{X}$ denote the expectation and the covariance matrix of a vector-valued random variable $X$, respectively. Let $\det(A)$ denote the determinant of $A$. Let $\nabla_x f(x,y)$ denote the partial derivative of $f$ with respect to $x$, where $x$ and $y$ are allowed to be vectors.

\section{Proof of \Cref{thm:lemma-diff_elliptical}}\label{sec:apdx-elliptical_lemma}

\begin{proof}[Proof of \Cref{thm:lemma-diff_elliptical}]
  By direct computation we have
  \begin{align*}
    \g(w) &:= \nabla_{w} \log \phi(w) = \nabla_{w} \log \psi(w^{\top} w) \\
    &= 2 \left. \frac{\diff}{\diff z}[\log \psi(z)] \right\vert_{z = w^{\top} w} w \\
    &= \rho(w^{\top} w) w,
  \end{align*}
  where we apply \eqref{eq:matrix_diff-xTx} and the definition of $\rho(z)$. Similarly,
  \begin{align*}
    \H(w) &:= \nabla_{w}^2 \log \phi(w) \\
    &= 2 \left. \frac{\diff}{\diff z}[\log \psi(z)] \right\vert_{z = w^{\top} w} \cdot I + 4 \left. \frac{\diff^2}{\diff z^2}[\log \psi(z)] \right\vert_{z = w^{\top} w} \cdot ww^{\top} \\
    &= \rho(w^{\top} w) I + \tau(w^{\top} w) ww^{\top},
  \end{align*}
  where we apply \eqref{eq:matrix_diff-xTx_2} and the definitions of $\rho(z)$ and $\tau(z)$.
\end{proof}

\section{Sample Complexity Analysis}\label{sec:apdx-sample_complexity}

In this section, we present our sample complexity analysis in a generic way. Specifically, consider any \emph{unbiased} expected loss $\L(A, \Sg) = \E{\ell\prn[\big]{ \Sg^{-1/2}(x' - Ax), \Sg }}$, i.e.,
\begin{equation*}
  (\As, \Ss) = \arg\min_{A,\Sg} \L(A, \Sg).
\end{equation*}
The corresponding estimator for $(A, \Sg)$ is constructed by minimizing its empirical version, namely
\begin{align*}
  (\hA, \hS) &:= \arg\min_{A,\Sg} \hL(A, \Sg), \\
  \text{where}\quad \hL(A, \Sg) &:= \frac{1}{N} \sum_{i=1}^{N} \ell\prn[\big]{ \Sg^{-1/2}(x'_i - Ax_i), \Sg }.
\end{align*}
In this section, we make the following assumption on $\ell$.

\begin{assumption}[Well-conditioned losses]\label{assum:apdx-loss_condition}
  $\ell(w, \Sg)$ is $L$-smooth in $w$, and $\mu$-strongly convex in both $w$ and $\Sg$; $\ell(0, \Sg) = 0$, $\forall \Sg \in \varTheta_{\Sg}$, and $\E[w \sim \phi]{\ell(w, \Sg)} \leq C_{\ell}$.
\end{assumption}

In addition, prompted by \Cref{thm:MLE-character_elliptical} and \Cref{thm:SME-character_elliptical}, we also assume that the estimator $\hA$ can be expressed in a reweighted OLS form, as summarized in the following assumption.

\begin{assumption}[Reweighted OLS form]\label{assum:apdx-reweighted_OLS}
There exist weights $\lambda_i \in (\lolambda, \hilambda)$ such that $\hA = \prn*{\sum_{i=1}^{N} \lambda_i x'_i x_i^{\top}} \prn*{\sum_{i=1}^{N} \lambda_i x_i x_i^{\top}}^{-1}$.
\end{assumption}

\rev{We point out that assuming bounded instance-specific weights $\lambda_i$'s should not be deemed as restrictive in the following sense. In practice, we usually have an initial guess of $(A, \Sg)$, especially when we have some prior knowledge of the system. When the hypothesis space $\varTheta$ is selected as a sufficiently neighborhood around $(\As, \Ss)$, it can be shown that such bounds $\hilambda > \lolambda > 0$ exist with high probability, owing to the continuity of the solution in $A$ and $\Sg$.}

The following lemma specifies the sample complexity bound for estimating $A$ using data collected from a single trajectory.

\begin{lemma}\label{thm:sample_complexity_A}
  Under Assumptions \ref{assum:data-single_traj}, \ref{assum:stability}, \ref{assum:apdx-loss_condition} and \ref{assum:apdx-reweighted_OLS}, given any $\delta \in (0, \frac{1}{2})$, there exists an $N_0 \in \N$ where $N_0 = \tilde{\mathrm{O}}\prn*{ \prn[\big]{ \norm{\Ss^{1/2}} \sigma }^2 d \log \tfrac{1}{\delta} }$, such that for any $N > N_0$, with probability at least $1-\delta$,
  \begin{equation*}
    \norm*{\hA - \As} \lesssim \sqrt{\frac{d \log (d/\delta)}{N}}.
  \end{equation*}
\end{lemma}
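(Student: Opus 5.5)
Starting from the reweighted OLS form in \Cref{assum:apdx-reweighted_OLS}, I will substitute the dynamics $x'_i = \As x_i + \Ss^{1/2} w_i$. This gives the error decomposition
\begin{equation*}
  \hA - \As = \Ss^{1/2} \prn*{\sum_{i=1}^{N} \lambda_i w_i x_i^{\top}} \prn*{\sum_{i=1}^{N} \lambda_i x_i x_i^{\top}}^{-1},
\end{equation*}
so that $\norm{\hA - \As} \le \norm{\Ss^{1/2}} \, \norm{S_N V_N^{-1/2}} \, \norm{V_N^{-1/2}}$, where $S_N := \sum_i \lambda_i w_i x_i^{\top}$ and $V_N := \sum_i \lambda_i x_i x_i^{\top}$. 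The weights $\lambda_i \in (\lolambda, \hilambda)$ then let me sandwich $\lolambda \bar V_N \preceq V_N \preceq \hilambda \bar V_N$ with the unweighted Gram matrix $\bar V_N := \sum_i x_i x_i^{\top}$. The proof then has two ingredients: an upper bound on the self-normalized term and a lower bound on $\lambda_{\min}(\bar V_N)$.

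\textbf{Self-normalized term.} Under \Cref{assum:data-single_traj}, the state $x_i$ is $\F_{i-1}$-measurable, while $w_i$ is independent of $\F_{i-1}$ and $\sigma$-sub-Gaussian by \Cref{assum:sub_gaussian_noise}. If the $\lambda_i$ were predictable, the self-normalized martingale bound of \cite{abbasi2011improved}, applied to each of the $d$ scalar coordinates $u^{\top} w_i$ and combined by a union bound, would give with probability $1-\delta$
\begin{equation*}
  \norm{S_N (V_N + V)^{-1/2}} \lesssim \sigma \sqrt{\hilambda \log\prn[\big]{ \tfrac{d}{\delta} \det(V_N+V)^{1/2}\det(V)^{-1/2} }}.
\end{equation*}
The log-determinant ratio is controlled by $d \log(1 + N \hilambda C_x^2/\lambda_{\min}(V))$ using $\norm{x_i}\le C_x$ from \Cref{assum:stability}. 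Choosing the regularizer $V \asymp N \lolambda I$, the ratio is $O(d)$ and the whole term is $O(\sqrt{d\log(d/\delta)})$.

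\textbf{Main obstacle: weights depend on the estimate.} The $\lambda_i$ depend on $(\hA,\hS)$ and hence on all the data, so they are not predictable. I would handle this by using the weights only through the sandwich bounds. I would write $S_N = \sum_i w_i (\lambda_i x_i)^{\top}$ and bound it by $\hilambda \norm{\sum_i w_i x_i^{\top}\bar V_N^{-1/2}}$ up to a factor coming from the fluctuation of the weights. If that fails, I would fall back on a covering argument over the compact $\varTheta$ of \Cref{assum:bounded_Theta}. The Lipschitz dependence of $\lambda_i$ on $(A,\Sg)$, which follows from the smoothness in \Cref{assum:apdx-loss_condition}, makes the covering cost only $\log$ factors absorbed in $\tilde{\mathrm{O}}$. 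In either route the constants $\hilambda/\lolambda$ are instance-specific and are hidden.

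\textbf{Persistent excitation.} I will show that $\lambda_{\min}(\bar V_N) \gtrsim N \lambda_{\min}(\Ss)$ for $N > N_0$. The approach is the standard single-trajectory small-ball / block-martingale argument (as in \cite{simchowitz2018learning, sarkar2019near}). Each $x_{i+1}$ contains the fresh noise $\Ss^{1/2} w_i$ with covariance $\Ss$, and sub-Gaussian concentration of $\sum_i \Ss^{1/2}w_i w_i^{\top}\Ss^{1/2}$, together with the cross terms, holds once $N \gtrsim (\norm{\Ss^{1/2}}\sigma)^2 d\log(1/\delta)$. This threshold is exactly the stated $N_0$. Stability (\Cref{assum:stability}) bounds $\bar V_N \preceq N C_x^2 I$ for the log-determinant step.

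\textbf{Combining.} Taking a union bound over the two events, each with probability $1-\delta/2$, I get
\begin{equation*}
  \norm{\hA-\As} \lesssim \frac{\sqrt{d\log(d/\delta)}}{\sqrt{N}},
\end{equation*}
which is the claim of \Cref{thm:sample_complexity_A}.
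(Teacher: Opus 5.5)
\textbf{Gap: the data-dependent weights.} The gap is the step you flag as the main obstacle, and neither of your two resolutions closes it. Your first route bounds $S_N$ by $\hilambda\norm{\sum_i w_i x_i^{\top}\bar{V}_N^{-1/2}}$ ``up to a factor from the fluctuation of the weights''. No bounded factor makes this valid, because the weights can correlate with the signs of the summands. \Cref{assum:apdx-reweighted_OLS} only gives $\lambda_i \in (\lolambda,\hilambda)$. So one may take $\lambda_i=\hilambda$ when $(u^{\top}w_i)(x_i^{\top}v)>0$ and $\lambda_i=\lolambda$ otherwise. Then $u^{\top}S_N v=\tfrac{\hilambda+\lolambda}{2}\sum_i (u^{\top}w_i)(x_i^{\top}v)+\tfrac{\hilambda-\lolambda}{2}\sum_i \lvert (u^{\top}w_i)(x_i^{\top}v)\rvert$. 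The second sum grows linearly in $N$, and $\norm{V_N}=O(N)$, so $\norm{\hA-\As}$ stays bounded away from zero. The rate therefore cannot follow from bounded weights alone; you must use what the weights actually are. The paper's proof takes exactly your first route: in \eqref{eq:OLS_like_concentration:e1-4} it pulls out $\hilambda/\lolambda$ and then applies \Cref{thm:lemma-auto_concentration_XXT} and \Cref{thm:lemma-auto_concentration_upper} to the unweighted OLS quantities. So your proposal coincides with the paper here, but that step is unjustified there for the same reason, and the paper does not supply the missing idea either.

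\textbf{Why the covering fallback fails, and what is missing.} Even at a fixed net point $\theta=(A,\Sg)$, the MLE weight, up to sign $\rho(\norm{\Sg^{-1/2}(x'_i-Ax_i)}^2)$ (similarly for SME), depends on $x'_i$ and hence on $w_i$ itself. So $\lambda_i(\theta)w_i x_i^{\top}$ is not a martingale difference. Its conditional mean is $\mathbb{E}_w\bigl[\rho(\norm{\Sg^{-1/2}(\Ss^{1/2}w+(\As-A)x_i)}^2)\,w\bigr]x_i^{\top}$. This vanishes when $A=\As$, since the integrand is then odd in $w$, but otherwise it creates a drift of order $N\norm{A-\As}$ that no self-normalized bound controls. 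Moreover, a net over $\varTheta$ at the needed resolution $N^{-1/2}$ has log-cardinality of order $d^2\log N$, which would spoil the $\sqrt{d}$ dependence.

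What is missing is a localization/linearization argument for the estimating equation $\sum_i\lambda_i(\hat{\theta})(x'_i-\hA x_i)x_i^{\top}=0$. Split $\lambda_i(\hat{\theta})=\lambda_i(\theta_\star)+[\lambda_i(\hat{\theta})-\lambda_i(\theta_\star)]$.
\begin{itemize}
  \item The first part gives $\sum_i\eta_i x_i^{\top}$ with $\eta_i:=\rho(\norm{w_i}^2)\Ss^{1/2}w_i$. These are i.i.d., mean zero by symmetry, and independent of $x_0,\dots,x_i$, so your self-normalized bound applies against the unweighted Gram matrix.
  \item For the second part, first prove consistency of $\hat{\theta}$ (identifiability from \Cref{assum:regular_Theta} plus a uniform law over the bounded $\varTheta$), then linearize. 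Its drift is approximately a scalar multiple of $(\hA-\As)\bar{V}_N$. After moving it to the left, the remaining coefficient is, up to sign, the location Fisher information $\mathbb{E}[\rho(\norm{w}^2)^2\norm{w}^2]/d>0$. The $\Sg$-direction drops out at first order by symmetry.
  \item Controlling the remainder needs smoothness of $\rho$ (and of $\tau$ for the SME). This does not follow from the $L$-smoothness of $\ell$ in $w$ in \Cref{assum:apdx-loss_condition}.
\end{itemize}

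\textbf{A smaller point.} Union-bounding over the $d$ coordinate directions only controls the rows of $S_N(V_N+V)^{-1/2}$, and so loses a factor $\sqrt{d}$ in operator norm. You need a net over $\mathbb{S}^{d-1}$, as in Proposition 8.2 of \cite{sarkar2019near}.
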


\begin{proof}
  It is straight-forward to bound the error as follows:
  \begin{subequations}\label{eq:OLS_like_concentration:e1}
  \begin{align}
    \norm*{\hA - \As}
    &= \norm*{ \prn*{\sum_{i=1}^{N} \lambda_i x'_i x_i^{\top}} \prn*{\sum_{i=1}^{N} \lambda_i x_i x_i^{\top}}^{-1} - \As } \label{eq:OLS_like_concentration:e1-1}\\
    &= \norm*{ \prn*{\sum_{i=1}^{N} \lambda_i (x'_i - \As x_i) x_i^{\top}} \prn*{\sum_{i=1}^{N} \lambda_i x_i x_i^{\top}}^{-1} } \label{eq:OLS_like_concentration:e1-2}\\
    &= \norm*{ \prn*{\sum_{i=1}^{N} \lambda_i \tw_i x_i^{\top}} \prn*{\sum_{i=1}^{N} \lambda_i x_i x_i^{\top}}^{-1} } \label{eq:OLS_like_concentration:e1-3}\\
    &\leq \frac{\hilambda}{\lolambda} \norm*{ \prn*{\sum_{i=1}^{N} \tw_i x_i^{\top}} \prn*{\sum_{i=1}^{N} x_i x_i^{\top}}^{-1} } \label{eq:OLS_like_concentration:e1-4}\\
    &\leq \frac{\hilambda}{\lolambda} \norm*{ \prn*{\sum_{i=1}^{N} \tw_i x_i^{\top}} \prn*{\sum_{i=1}^{N} x_i x_i^{\top}}^{-1/2} } \cdot \norm*{ \prn*{\sum_{i=1}^{N} x_i x_i^{\top}}^{-1/2} } \label{eq:OLS_like_concentration:e1-5}\\
    &= \frac{\hilambda}{\lolambda} \cdot \frac{\norm*{ \prn*{\sum_{i=1}^{N} \tw_i x_i^{\top}} \prn*{\sum_{i=1}^{N} x_i x_i^{\top}}^{-1/2} }}{\sigma_{\min}\prn*{ \sum_{i=1}^{N} x_i x_i^{\top} }^{1/2}}, \label{eq:OLS_like_concentration:e1-6}
  \end{align}
  \end{subequations}
  where in \eqref{eq:OLS_like_concentration:e1-3} we introduce the scaled noise $\tw_i := \Ss^{1/2} w_i = x'_i - \As x_i$, which is sub-Gaussian with variance proxy $\prn[\big]{ \norm{\Ss^{1/2}} \sigma }^2$ by \Cref{thm:lemma-subG_linear_transform}; in \eqref{eq:OLS_like_concentration:e1-4} we use the facts that $\sum_{i=1}^{N} \lambda_i x_i x_i^{\top} \succeq \lolambda \sum_{i=1}^{N} x_i x_i^{\top}$ when $\lambda_i \geq \lolambda$ ($\forall i \in [N]$), and that $\norm{AB} \geq \norm{AC}$ for any $A$ when $B \succeq C$ are both symmetric; in \eqref{eq:OLS_like_concentration:e1-5} we apply the sub-multiplicativity of matrix norms.
  
  Now it only suffices to bound the numerator and denominator in \eqref{eq:OLS_like_concentration:e1-6}. For the numerator, using \Cref{thm:lemma-auto_concentration_XXT}, we know that with probability at least $1-\delta$, with sufficiently large $N \geq N_0$, where $N_0 = \tilde{\mathrm{O}}\prn*{ \prn[\big]{ \norm{\Ss^{1/2}} \sigma }^2 d \log \tfrac{1}{\delta} }$,
  we have
  \begin{equation*}
    \sum_{i=1}^{N} x_i x_i^{\top} \succeq \frac{\prn[\big]{ \norm{\Ss^{1/2}} \sigma }^2}{4} NI,
  \end{equation*}
  and thus
  \begin{equation*}
    \sigma_{\min}\prn*{ \sum_{i=1}^{N} x_i x_i^{\top} } \geq \frac{\prn[\big]{\norm{\Ss^{1/2}} \sigma}^2}{4} N.
  \end{equation*}
  Meanwhile, the denominator is bounded by \Cref{thm:lemma-auto_concentration_upper}, so that
  \begin{equation*}
    \norm*{\hA - \As} \lesssim \frac{2 \hilambda}{\norm{\Ss^{1/2}} \sigma \lolambda} \sqrt{\frac{d \log(d/\delta)}{N}}.
  \end{equation*}
  This completes the proof.
\end{proof}

\begin{lemma}\label{thm:concentration_L}
  Under the premises of \Cref{thm:sample_complexity_A}, and Assumptions \ref{assum:sub_gaussian_noise}, \ref{assum:stability} and \ref{assum:bounded_Theta}, for any $\delta \in (0,1)$, with probability at least $1-\delta$,
  \begin{equation*}
    \abs[\big]{\hL(A, \Sg) - \L(A, \Sg)} \lesssim \frac{d \log(dN/\delta) \log(1/\delta)}{N}.
  \end{equation*}
\end{lemma}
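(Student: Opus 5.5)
The plan is to reduce $\hL-\L$ to a sum of bounded, conditionally centred increments along the single trajectory, and then apply a Bernstein/Freedman-type martingale inequality. The target rate $d\log(dN/\delta)\log(1/\delta)/N$ should come out as the product of two factors. The first is an almost-sure envelope of order $d\log(dN/\delta)$ for each increment, obtained on a high-probability event. The second is the $\log(1/\delta)/N$ factor from the range term of the concentration inequality. Throughout I fix $(A,\Sg)\in\varTheta$ and write $\ell_i:=\ell\prn[\big]{\Sg^{-1/2}(x'_i-Ax_i),\Sg}$. With $\F_i$ the $\sigma$-field generated by $x_0,\dots,x_i$, I take $\L$ to be the trajectory average $\frac1N\sum_i\E{\ell_i\mid\F_i}$, since under \Cref{assum:data-single_traj} this is the meaning of the expectation over $\dD$. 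Then $\hL-\L=\frac1N\sum_{i=1}^N D_i$ with $D_i:=\ell_i-\E{\ell_i\mid\F_i}$, which is a martingale difference sequence.

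\textbf{Step 1 (envelope).} Split the residual as $\Sg^{-1/2}(x'_i-Ax_i)=\Sg^{-1/2}\tw_i+\Sg^{-1/2}(\As-A)x_i$, with $\tw_i=\Ss^{1/2}w_i$ as in the proof of \Cref{thm:sample_complexity_A}. \Cref{assum:apdx-loss_condition} gives $L$-smoothness and $\ell(0,\Sg)=0$. Strong convexity together with $0\le\ell$ near the minimiser forces the gradient at $0$ to vanish, so $0\le\ell(w,\Sg)\le\frac L2\norm{w}^2$. \Cref{assum:bounded_Theta} and \Cref{assum:stability} then give
\[
0\le \ell_i\le L C_{\Sg}^2\prn[\big]{\norm{\tw_i}^2+(C_A+\norm{\As})^2C_x^2}.
\]
For the noise term I use the sub-Gaussian assumption (\Cref{assum:sub_gaussian_noise}, with \Cref{thm:lemma-subG_linear_transform}) and a union bound over the $d$ coordinates and the $N$ time steps. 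This shows that, with probability at least $1-\delta/2$, $\max_i\norm{\tw_i}^2\lesssim\norm{\Ss^{1/2}}^2\sigma^2 d\log(dN/\delta)$. On this event every $\ell_i$, and hence every $D_i$, is bounded by $B\lesssim d\log(dN/\delta)$, with instance constants hidden. This is exactly where the $d\log(dN/\delta)$ factor comes from.

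\textbf{Step 2 (truncation and martingale bound).} To keep the increments centred after conditioning on the good event, I replace $\ell_i$ by the truncated $\ell_i\wedge B$ and recentre. Because the tail of $\norm{\tw_i}^2$ beyond $B$ is sub-exponential, the recentring bias is at most of order $\delta B/N$, which is negligible. I then apply Freedman's inequality to $\sum_i D_i$ with range $B$ and predictable variance $V_N=\sum_i\E{D_i^2\mid\F_i}$. With probability at least $1-\delta/2$ this gives
\[
\abs{\hL-\L}\lesssim \sqrt{\frac{V_N\log(1/\delta)}{N^2}}+\frac{B\log(1/\delta)}{N}.
\]
The second term is the claimed rate.

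\textbf{Step 3 (main obstacle: the variance term).} For the statement to hold, the first term in Step 2 must be shown to be of the same order as the second. This requires $V_N\lesssim B^2\log(1/\delta)$, i.e.\ $V_N$ essentially $\mathrm{O}(1)$ in $N$ rather than of order $N$. A crude bound only gives $V_N\le NC_\ell B$, from \Cref{assum:apdx-loss_condition}, and that would yield a $1/\sqrt N$ rate. What I would try first is to exploit the quadratic structure at $\thetas$. By unbiasedness and smoothness, $\ell_i-\E{\ell_i\mid\F_i}$ is driven by the conditional fluctuation of a loss that is quadratic in $(A-\As)x_i$ and in the mismatch between $\Sg$ and $\Ss$. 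For parameters in the $\sqrt{d\log(d/\delta)/N}$-neighbourhood delivered by \Cref{thm:sample_complexity_A}, I would bound $\E{D_i^2\mid\F_i}\lesssim B\norm{(A-\As)x_i}^2/N$-type quantities, using the self-normalised bound of \Cref{thm:lemma-auto_concentration_upper} to sum these. If that localisation succeeds, the variance term becomes lower order and the stated bound follows after a union bound over the two events in Steps 1 and 2. If it fails, the argument only delivers the $\sqrt{\cdot/N}$ rate, so this variance control is the step I regard as the crux.
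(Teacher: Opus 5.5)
Step 3 cannot be carried out, and the reason is not technical: the variance term in your Freedman bound is genuinely of order $N^{-1/2}$. Your Steps 1--2 are sound in outline.

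Take $(A,\Sg)=(\As,\Ss)$. The lemma must cover this point, since it is invoked there in the proof of \Cref{thm:sample_complexity_Sigma}. The residual is then exactly $w_i$, so $D_i=\ell(w_i,\Ss)-\mathbb{E}\,\ell(w,\Ss)$ is an i.i.d.\ sequence with variance $v$.
\begin{itemize}
\item $v$ is finite, by sub-Gaussianity of $w$ and the quadratic envelope.
\item $v$ is strictly positive: $w$ has a density and $\ell(\cdot,\Ss)$ is strongly convex, so $\ell(w,\Ss)$ is not almost surely constant.
\end{itemize}
Hence $V_N=Nv$, whatever localisation in $A$ you use. Your heuristic that the conditional fluctuation is ``driven by'' $(A-\As)x_i$ and $\Sg-\Ss$ confuses the mean with the fluctuation. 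Unbiasedness says $\L$ is minimised at $\thetas$, but at $\thetas$ the loss still fluctuates with the noise.

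This is sharp. By the CLT, $\sqrt{N}\,(\hL-\L)(\As,\Ss)$ converges in distribution to $\mathcal{N}(0,v)$. So for any fixed $\delta<1$, the event $|\hL-\L|\lesssim d\log(dN/\delta)\log(1/\delta)/N$ has probability tending to $0$. The $1/N$ rate in the statement is therefore false for every loss that is non-constant in $w$, in particular for any loss with the strong convexity required in \Cref{assum:apdx-loss_condition}. No argument can close your Step 3.

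A minor point: nothing in \Cref{assum:apdx-loss_condition} makes $w=0$ a minimiser of $\ell(\cdot,\Sg)$. Your envelope therefore needs an extra $\|\nabla_w\ell(0,\Sg)\|\,\|w\|$ term, which does not change its order.

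\textbf{Comparison with the paper.} The paper splits $\hL-\L$ into two $A$-perturbation terms, handled by smoothness and \Cref{thm:sample_complexity_A}, and a middle term at $\As$. It treats the middle term essentially as your Step 1 does, truncating $\|w_i\|$ at order $\sqrt{\log(N/\delta)}$. It then finishes with ``Hoeffding's inequality'', writing the result as range $\times\log(8/\delta)/N$. Hoeffding actually gives range $\times\sqrt{2\log(8/\delta)/N}$. Moreover, the triangle-inequality bound $\frac1N\sum_i|\ell(w_i,\Sg)-\mathbb{E}\,\ell(w,\Sg)|$ written just before it does not even tend to zero. So the obstacle you flagged is exactly where the paper's proof breaks too, and you were right to distrust it.

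What your Steps 1--2 honestly prove is the $N^{-1/2}$ rate, times a variance factor and logarithms, plus the $d\log(dN/\delta)\log(1/\delta)/N$ range term. Fed into \Cref{thm:sample_complexity_Sigma} as written, that only yields $N^{-1/4}$ for $\hS$.

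Your localisation idea is the right tool for recovering $N^{-1/2}$, but it must be applied to the centred increment $(\hL-\L)(A,\Sg)-(\hL-\L)(\As,\Ss)$, uniformly over a shrinking neighbourhood of $\thetas$. That increment's conditional variance does vanish as $(A,\Sg)\to(\As,\Ss)$. It is also all that the basic inequality behind \Cref{thm:sample_complexity_Sigma} actually needs.
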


\begin{proof}
  We start by decomposing the error into three terms
  \begin{align*}
    \hL(A, \Sg) - \L(A, \Sg) &= \hL(A, \Sg) - \hL(\As, \Sg) \\
    &\quad{}+ \hL(\As, \Sg) - \L(\As, \Sg) \\
    &\quad{}+ \L(\As, \Sg) - \L(A, \Sg).
  \end{align*}
  For the first term, with probability at least $1 - \delta/4$, we have
  \begin{align*}
    &\abs[\big]{\hL(A, \Sg) - \hL(\As, \Sg)} \\
    ={}& \frac{1}{N} \sum_{i=1}^{N} \prn*{ \ell\prn[\big]{ \Sg^{-1/2}(x'_i - A x_i), \Sg } - \ell\prn[\big]{ \Sg^{-1/2}(x'_i - \As x_i), \Sg } } \\
    \leq{}& \frac{1}{2} L \norm[\big]{\Sg^{1/2} (A-\As) x_i}^2
    \leq \frac{1}{2} L \norm{\Sg^{1/2}}^2 \norm{A-\As}^2 \norm{x_i}^2 \\
    \leq{}& \frac{1}{2} L C_x^2 C_{\Sg}^2 \norm{A - \As}^2 \lesssim \frac{d \log(d/\delta)}{N},
  \end{align*}
  where we plug in \Cref{assum:bounded_Theta}, \Cref{assum:apdx-loss_condition} and \Cref{thm:sample_complexity_A}. Similarly, for the third term, with probability at least $1 - \delta/4$,
  \begin{equation*}
    \abs[\big]{\L(A, \Sg) - \L(\As, \Sg)} \leq \frac{1}{2} L C_x^2 C_{\Sg}^2 \norm{A - \As}^2 \lesssim \frac{d \log(d/\delta)}{N}.
  \end{equation*}
  For the second term, note that with the true $\As$, $\Ss^{-1/2} (x'_i - \As x_i) = w_i$ are i.i.d. samples from the base distribution of zero mean. Therefore, using \Cref{thm:lemma-subG_concentration}, we have
  \begin{equation*}
    \norm{w_i} \leq \sqrt{2c \sigma^2 \log \frac{8N}{\delta}},
  \end{equation*}
  with probability at least $1-\delta/(4N)$. Consequently,
  \begin{align*}
    \abs[\big]{\ell(w_i, \Sg) - \E[w]{\ell(w, \Sg)}}
    &\leq \abs[\big]{\ell(w_i, \Sg) - \ell(0, \Sg)} + \abs[\big]{\E[w]{\ell(w, \Sg)}} \\
    &\leq L c \sigma^2 \log \frac{8N}{\delta} + C_{\ell}
  \end{align*}
  by \Cref{assum:apdx-loss_condition}. Finally, using Hoeffding's inequality, we have
  \begin{align*}
    \abs[\big]{\hL(\As, \Sg) - \L(\As, \Sg)}
    &\leq \frac{1}{N} \sum_{i=1}^{N} \abs[\big]{\ell(w_i, \Sg) - \E[w]{\ell(w, \Sg)}} \\
    &\leq \frac{2Lc \sigma^2}{N} \log \frac{8N}{\delta} \log \frac{8}{\delta} + \frac{2C_{\ell}}{N} \log \frac{8}{\delta}
  \end{align*}
  with probability at least $1-\delta/4$. The proof is now completed by applying union bound to the inequalities above.
\end{proof}

\begin{lemma}\label{thm:sample_complexity_Sigma}
  Under the premises of \Cref{thm:concentration_L}, for any $\delta \in (0,1)$, with probability at least $1-\delta$, we have
  \begin{equation*}
    \norm[\big]{\hS - \Ss} \lesssim \sqrt{\frac{d \log(dN/\delta) \log(1/\delta)}{N}}.
  \end{equation*}
\end{lemma}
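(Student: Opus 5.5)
The plan is the standard M-estimation argument: combine the uniform-in-parameter deviation bound of \Cref{thm:concentration_L} with the strong convexity in $\Sg$ from \Cref{assum:apdx-loss_condition}, and sandwich the excess population loss at $(\hA, \hS)$.

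\textbf{Step 1 (upper bound on the excess loss).} Since $(\hA,\hS)$ minimizes $\hL$, we have $\hL(\hA,\hS) \le \hL(\As,\Ss)$. Write
\begin{align*}
  \L(\hA,\hS) - \L(\As,\Ss) &= \bigl[\L(\hA,\hS) - \hL(\hA,\hS)\bigr] + \bigl[\hL(\hA,\hS) - \hL(\As,\Ss)\bigr] \\
  &\quad{}+ \bigl[\hL(\As,\Ss) - \L(\As,\Ss)\bigr].
\end{align*}
The middle bracket is nonpositive. Each outer bracket is controlled by \Cref{thm:concentration_L}, each with confidence $1-\delta/2$, which gives
\[
  \L(\hA,\hS) - \L(\As,\Ss) \lesssim \frac{d\log(dN/\delta)\log(1/\delta)}{N}.
\]
I have to be careful that \Cref{thm:concentration_L} is applied at the data-dependent point $(\hA,\hS)$. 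In its proof, the $A$-dependence enters only through $\norm{A-\As}$, which \Cref{thm:sample_complexity_A} already controls at $\hA$. The middle (Hoeffding) term is evaluated at $(\As,\Sg)$, so for $\Sg = \hS$ I would need it to hold uniformly over $\varTheta_\Sg$. Because $\varTheta$ is bounded (\Cref{assum:bounded_Theta}) and $\ell$ is Lipschitz in $\Sg$ on bounded sets, a covering argument over $\varTheta_\Sg$ should do this. The covering adds a $d^2\log(\cdot)$ factor, which I would absorb into the $\lesssim$ or accept as the hidden constant. This uniformity step is the main obstacle, and I would handle it first, by a covering/union bound.

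\textbf{Step 2 (lower bound on the excess loss).} I would decompose the excess loss as
\[
  \L(\hA,\hS) - \L(\As,\Ss) = \bigl[\L(\hA,\hS) - \L(\As,\hS)\bigr] + \bigl[\L(\As,\hS) - \L(\As,\Ss)\bigr].
\]
For the first bracket, the smoothness estimate already used in \Cref{thm:concentration_L} gives
\[
  \bigl|\L(\hA,\hS) - \L(\As,\hS)\bigr| \le \tfrac12 L C_x^2 C_\Sg^2 \norm{\hA-\As}^2 \lesssim \frac{d\log(d/\delta)}{N}.
\]
For the second bracket, $\Ss$ minimizes $\L(\As,\cdot)$ and $\Ss$ lies in the interior of $\varTheta$ (\Cref{assum:regular_Theta}), so first-order optimality holds there. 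The $\mu$-strong convexity of $\ell$ in $\Sg$ passes to the expectation, giving
\[
  \L(\As,\hS) - \L(\As,\Ss) \ge \tfrac{\mu}{2}\norm{\hS-\Ss}_F^2 \ge \tfrac{\mu}{2}\norm{\hS-\Ss}^2 .
\]

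\textbf{Step 3 (conclusion).} Combining the two steps,
\[
  \tfrac{\mu}{2}\norm{\hS-\Ss}^2 \lesssim \frac{d\log(dN/\delta)\log(1/\delta)}{N} + \frac{d\log(d/\delta)}{N}.
\]
The second term is dominated by the first. Taking square roots and applying a union bound over the events from \Cref{thm:sample_complexity_A} and \Cref{thm:concentration_L}, with $\delta$ rescaled by a constant, gives the claimed rate with probability at least $1-\delta$.
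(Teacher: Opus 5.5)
Your skeleton matches the paper's: you sandwich the excess loss between the two deviation terms from \Cref{thm:concentration_L} and a quadratic-growth lower bound from strong convexity. The split differs slightly.

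\begin{itemize}
\item \textbf{The paper's split.} It adds the empirical gap $\hL(\hA,\hS)-\hL(\As,\Ss)$ and the population gap $\L(\As,\Ss)-\L(\hA,\hS)$, and bounds the sum by $|\hL(\hA,\hS)-\L(\hA,\hS)|+|\L(\As,\Ss)-\hL(\As,\Ss)|$. It then lower-bounds \emph{each} gap by $\frac{\mu}{2}\norm{\hS-\Ss}^2$, using strong convexity of both $\hL$ and $\L$ jointly in $(A,\Sg)$. Its displayed inequalities are written in the maximization convention; yours are the ones consistent with the $\arg\min$ definitions in the appendix.
\item \textbf{Your split.} You drop the empirical gap, which has a sign. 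You use only quadratic growth of the deterministic function $\L(\As,\cdot)$ at its interior minimizer. You pay for $\hA\neq\As$ with the smoothness estimate from the proof of \Cref{thm:concentration_L} combined with \Cref{thm:sample_complexity_A}.
\item \textbf{Trade-off.} Your route needs no strong convexity of $\hL$ and none in $A$. It costs an extra $\mathrm{O}(d\log(d/\delta)/N)$ term, which is dominated, so the rates coincide.
\item Both you and the paper read \Cref{assum:apdx-loss_condition} as strong convexity of $\Sg\mapsto\L(\As,\Sg)$ itself. Since $\Sg$ also enters through $w=\Sg^{-1/2}\tw$, this is an interpretation rather than a consequence, but the leap is the same in both.
\end{itemize}

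The uniformity issue you flag is real. The paper's proof does not address it: it applies the fixed-parameter \Cref{thm:concentration_L} at the data-dependent point $(\hA,\hS)$ without comment. However, your repair does not deliver the stated bound.

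\begin{itemize}
\item \textbf{The net is too expensive.} $\varTheta_\Sg$ has dimension $d(d+1)/2$, so a net at scale $\epsilon\sim 1/N$ has log-cardinality of order $d^2\log(RN)$. A union bound inserts this into both logarithmic factors of \Cref{thm:concentration_L}. The result has extra polynomial factors in $d$ and extra $\log N$ factors. These cannot be hidden in $\lesssim$, because \Cref{thm:main-sample_complexity} requires the hidden constants to be independent of $d$, $N$ and $\delta$.
\item \textbf{Boundedness is missing.} \Cref{assum:bounded_Theta} bounds only $\norm{\Sg^{-1/2}}$, which bounds $\Sg$ from below. It does not bound $\norm{\Sg}$ from above, so $\varTheta_\Sg$ need not be totally bounded.
\item \textbf{Lipschitz control is missing.} Extending from the net to all of $\varTheta_\Sg$ needs a Lipschitz modulus of $\hL(\As,\cdot)$ and $\L(\As,\cdot)$ in $\Sg$. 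Nothing assumed provides this; \Cref{assum:apdx-loss_condition} gives smoothness only in $w$.
\end{itemize}

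As proposed, your Step~1 therefore proves a weaker statement under extra assumptions. To obtain the displayed rate you must, like the paper, take \Cref{thm:concentration_L} to hold at $(\hA,\hS)$, i.e.\ uniformly over $\varTheta$. Under that reading the rest of your argument goes through.
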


\begin{proof}
  Note that, by definition of $(\hA, \hS)$ and $(\As, \Ss)$, we have
  \begin{equation*}
    \hL(\hA, \hS) \geq \hL(\As, \Ss),\quad
    \L(\As, \Ss) \geq \L(\hA, \hS).
  \end{equation*}
  Therefore, we shall apply \Cref{thm:concentration_L} twice to obtain
  \begin{align*}
    0 \leq{}& \hL(\hA, \hS) - \hL(\As, \Ss) + \L(\As, \Ss) - \L(\hA, \hS) \\
    \leq{}& \abs[\big]{\hL(\hA, \hS) - \L(\hA, \hS)} + \abs[\big]{\L(\As, \Ss) - \hL(\As, \Ss)} \\
    \lesssim{}& 2 \cdot \frac{d \log(dN/\delta) \log(1/\delta)}{N}
  \end{align*}
  with probability at least $1-\delta$. Meanwhile, by $\mu$-strong convexity assumed in \Cref{assum:apdx-loss_condition}, we have in the opposite direction,
  \begin{align*}
    \hL(\hA, \hS) - \hL(\As, \Ss) &\geq \frac{\mu}{2} \norm{\hS - \Ss}^2, \\
    \L(\As, \Ss) - \L(\hA, \hS) &\geq \frac{\mu}{2} \norm{\hS - \Ss}^2.
  \end{align*}
  Combining these inequalities yields an upper bound on $\norm[\big]{\hS - \Ss}$:
  \begin{equation*}
    \norm[\big]{\hS - \Ss} \lesssim \sqrt{\frac{d \log(dN/\delta) \log(1/\delta)}{N}}.
  \end{equation*}
  This completes the proof.
\end{proof}

To conclude \Cref{thm:main-sample_complexity} as direct corollaries of \Cref{thm:sample_complexity_A} and \Cref{thm:sample_complexity_Sigma}, it only suffices to verify Assumptions \ref{assum:apdx-loss_condition} and \ref{assum:apdx-reweighted_OLS}, which is straight-forward due to the boundedness of $\varTheta$ and the guaranteed finite first-order moment of the base distribution.

\section{Technical Tools}

\boldtitle{Matrix Differential Calculus.} The following formulae for differentiating matrix-variate functions are well-known in literature (see, e.g., \cite{petersen2008matrix}) and can be easily verified.
\begin{subequations}
\begin{align}
  \frac{\partial \log \det(A)}{\partial A} &= A^{-1},\label{eq:matrix_diff-log_det}\\
  \frac{\partial \tr(A)}{\partial A} &= I,\label{eq:matrix_diff-tr_A}\\
  \frac{\partial f(CAx)}{\partial A} &= C^{\top} [\nabla f(CAx)] x^{\top},\label{eq:matrix_diff-CAx}\\
  \frac{\partial f(Ax+b)}{\partial x} &= A^{\top} [\nabla f (Ax+b)],\label{eq:matrix_diff-Ax+b}\\
  \frac{\partial^2 f(Ax+b)}{\partial x^2} &= A^{\top} [\nabla^2 f(Ax+b)] A,\label{eq:matrix_diff-Ax+b_2}\\
  \frac{\partial f(x^{\top} x)}{\partial x} &= 2 f'(x^{\top} x) x,\label{eq:matrix_diff-xTx}\\
  \frac{\partial^2 f(x^{\top} x)}{\partial x^2} &= 2 f'(x^{\top} x) I + 4f''(x^{\top} x) xx^{\top}.\label{eq:matrix_diff-xTx_2}
\end{align}
\end{subequations}


\boldtitle{Sub-Gaussian Random Variable.} Sub-Gaussian random variables, a notion that intuitively captures distributions with tails ``lighter'' than Gaussian, shall be defined as follows.

\begin{definition}[Sub-Gaussian]\label{def:sub_gaussian}
  A random vector $x \in \R^d$ is said to be sub-Gaussian with \emph{variance proxy} $\sigma^2$, if
  \begin{equation*}
    \E{\exp\prn[\big]{v^{\top} (x-\mu) t}} \leq \exp\prn[\big]{\sigma^2 t^2 / 2},\quad \forall t \in \R, \forall v \in \mathbb{S}^{d-1},
  \end{equation*}
  where $\mu := \E{x}$ is the expectation of $x$.
\end{definition}

As its name suggests, sub-Gaussian random variables have tails ``lighter'' than Gaussian, as summarized below.

\begin{lemma}\label{thm:lemma-subG_concentration}
  Given a sub-Gaussian random vector $x \in \R^d$ with variance proxy $\sigma^2$, we have
  \begin{equation*}
    \Prob{\norm{x - \mu} \geq t} \leq 2 \exp\prn*{- \frac{t^2}{2c \sigma^2}}
  \end{equation*}
  for some universal constant $c$, where $\mu := \E{x}$.
\end{lemma}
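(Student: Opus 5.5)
The plan is a Chernoff bound in each fixed direction, followed by an $\varepsilon$-net argument to pass from directions to the Euclidean norm. The last step is where the word ``universal'' has to be examined carefully.

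\emph{Step 1 (one direction).} Fix $v \in \mathbb{S}^{d-1}$ and set $y := v^{\top}(x-\mu)$. By \Cref{def:sub_gaussian}, $\E{e^{sy}} \le e^{\sigma^2 s^2/2}$ for every $s \in \R$. Markov's inequality applied to $e^{sy}$, optimized at $s = t/\sigma^2$, gives $\Prob{y \ge t} \le e^{-t^2/(2\sigma^2)}$. Applying the same bound to $-v$ gives
\begin{equation*}
  \Prob{\abs{v^{\top}(x-\mu)} \ge t} \le 2\exp\prn*{-\frac{t^2}{2\sigma^2}}.
\end{equation*}
This is the claimed inequality with $c = 1$, and it already covers the case $d=1$ completely.

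\emph{Step 2 (from directions to the norm).} Take a $\tfrac12$-net $\mathcal{N}$ of $\mathbb{S}^{d-1}$ with $\abs{\mathcal{N}} \le 5^d$. The standard argument (pick $v$ attaining the norm and a net point within $\tfrac12$ of it) gives
\begin{equation*}
  \norm{x-\mu} \le 2\max_{u \in \mathcal{N}} u^{\top}(x-\mu).
\end{equation*}
A union bound over $\mathcal{N}$ combined with Step 1 at level $t/2$ then yields
\begin{equation*}
  \Prob{\norm{x-\mu} \ge t} \le 5^d \exp\prn*{-\frac{t^2}{8\sigma^2}}.
\end{equation*}

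\emph{Step 3 (absorbing the net factor), the main obstacle.} The factor $5^d$ must be absorbed into the exponent, and this cannot be done for all $t$ with a dimension-free $c$. Indeed, for $x \sim \mathcal{N}(0, \sigma^2 I_d)$ one has $\norm{x} \approx \sigma\sqrt d$ with high probability. So a bound of the form $2e^{-t^2/(2c\sigma^2)}$ with $c$ independent of $d$ fails at $t \approx \sigma\sqrt d/2$ once $d$ is large: the true probability is close to $1$, while the right-hand side is about $2e^{-d/(8c)}$.

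What I will prove is therefore the statement in the regime where it is actually used, namely $t^2 \ge 16\log 5\cdot d\sigma^2$. In that regime $5^d \le \exp(t^2/(16\sigma^2))$, so
\begin{equation*}
  5^d e^{-t^2/(8\sigma^2)} \le e^{-t^2/(16\sigma^2)} \le 2\exp\prn*{-\frac{t^2}{2c\sigma^2}}
\end{equation*}
with the universal constant $c = 8$.

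This regime covers the appendix's application, where $t = \sqrt{2c\sigma^2\log(8N/\delta)}$: one only needs $\log(8N/\delta) \gtrsim d$, which is consistent with the $N \ge N_0 = \tilde{\mathrm{O}}(d\log\tfrac1\delta)$ requirement of \Cref{thm:sample_complexity_A}. For smaller $t$ the honest bound is the one from Step 2 with the explicit $5^d$ prefactor, or equivalently $c$ depending on $d$. I would flag this restriction explicitly rather than hide it.
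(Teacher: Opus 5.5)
The paper gives no proof of this lemma; it is quoted as a standard fact. So the real question is whether the statement holds, and your diagnosis is correct: with a dimension-free $c$ it is false. Your Gaussian example is a valid counterexample, since $\Prob{\norm{x}\ge\sigma\sqrt d/2}\to 1$ while $2e^{-d/(8c)}\to 0$. In the paper's own setting the failure is unavoidable, because $w$ has identity covariance and hence $\mathbb{E}\norm{w}^2=d$. Your Steps 1 and 2 are correct, and so is the restricted statement with $c=8$.

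I have two comments on Step 3.

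First, instead of restricting $t$, you can keep the lemma for all $t\ge 0$ by letting $c$ grow linearly in $d$. Your counterexample shows this order is necessary, since it forces $c\gtrsim d$. Use the bound $\min\{1,\,5^d e^{-t^2/(8\sigma^2)}\}$. If $t^2\le 2c\sigma^2\log 2$, the right-hand side $2e^{-t^2/(2c\sigma^2)}$ is at least $1$ and there is nothing to prove. Otherwise it suffices that $d\log 5-\log 2\le \frac{t^2}{\sigma^2}\bigl(\frac18-\frac1{2c}\bigr)$, and in that range the right side exceeds $\frac{c\log 2}{4}-\log 2$. Hence any $c\ge 4d\log 5/\log 2$ works, e.g.\ $c=10d$. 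This all-$t$ form is the natural corrected statement.

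Second, your claim that the restricted regime covers the appendix's application is wrong. There $t^2=2c\sigma^2\log(8N/\delta)$, so with $c=8$ your condition $t^2\ge 16\log 5\cdot d\sigma^2$ reads $8N/\delta\ge 5^d$. That requires $N$ exponential in $d$ for fixed $\delta$. It is not implied by $N\ge N_0=\tilde{\mathrm{O}}(d\log\tfrac1\delta)$, which is only polynomial in $d$.

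The proof of \Cref{thm:concentration_L} can legitimately use either of two things:
\begin{itemize}
\item the $c=10d$ version above;
\item your Step 2 bound directly, which gives $\norm{w_i}^2\le 8\sigma^2\bigl(d\log 5+\log(4N/\delta)\bigr)$ with probability at least $1-\delta/(4N)$. This is sharper.
\end{itemize}
Either way, the dimension enters that step as a factor of order $d$ in the bound on $\norm{w_i}^2$, not as an exponential sample-size requirement.
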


The following lemma guarantees that the linear transform of a sub-Gaussian random variable is still sub-Gaussian.

\begin{lemma}\label{thm:lemma-subG_linear_transform}
  Given a sub-Gaussian random vector $x \in \R^d$ with variance proxy $\sigma^2$, for any $A \in \R^{d \times d}$, $Ax$ is also a sub-Gaussian random vector, with variance proxy $\prn[\big]{ \norm{A} \sigma }^2$.
\end{lemma}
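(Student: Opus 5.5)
The statement is \Cref{thm:lemma-subG_linear_transform}: for sub-Gaussian $x$ with variance proxy $\sigma^2$ and any $A \in \R^{d\times d}$, $Ax$ is sub-Gaussian with variance proxy $(\norm{A}\sigma)^2$. The plan is to verify \Cref{def:sub_gaussian} for $Ax$ directly, by reducing every one-dimensional projection of $Ax$ to a rescaled projection of $x$.

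\textbf{Setup.} Let $\mu := \E{x}$. By linearity of expectation, $\E{Ax} = A\mu$, so the centered vector is
\begin{equation*}
  Ax - \E{Ax} = A(x-\mu).
\end{equation*}
Fix $v \in \mathbb{S}^{d-1}$ and $t \in \R$. Then
\begin{equation*}
  v^\top A(x-\mu)\, t = (A^\top v)^\top (x-\mu)\, t .
\end{equation*}

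\textbf{Degenerate direction.} If $A^\top v = \zero$, the exponent vanishes identically. The moment generating function then equals $1 \le \exp\prn[\big]{\norm{A}^2\sigma^2 t^2/2}$, so the bound holds trivially.

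\textbf{Generic direction.} Otherwise, set $u := A^\top v / \norm{A^\top v} \in \mathbb{S}^{d-1}$ and $s := \norm{A^\top v}\, t \in \R$. Applying the sub-Gaussian property of $x$ with direction $u$ and scalar $s$ gives
\begin{equation*}
  \E{\exp\prn[\big]{v^\top A(x-\mu)t}} = \E{\exp\prn[\big]{u^\top (x-\mu) s}} \le \exp\prn[\big]{\sigma^2 \norm{A^\top v}^2 t^2/2}.
\end{equation*}
The key observation is that the definition quantifies over all real $t$, so it absorbs the rescaling by $\norm{A^\top v}$.

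\textbf{Conclusion.} Since $\norm{A^\top v} \le \norm{A^\top} = \norm{A}$ in the induced 2-norm, and $\exp$ is monotone, the right-hand side is at most $\exp\prn[\big]{(\norm{A}\sigma)^2 t^2/2}$. This holds for all $v$ and $t$, which is exactly the claim.

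There is no real obstacle here. The only points needing care are the zero case $A^\top v = \zero$ and the identity $\norm{A^\top} = \norm{A}$ for the spectral norm. Nothing from the earlier sections of the paper is needed beyond \Cref{def:sub_gaussian}.
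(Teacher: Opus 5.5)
Your proposal is correct and matches the paper's proof essentially step for step: rewrite $v^{\top} A(x-\mu)t$ as a projection of $x-\mu$ along $A^{\top} v/\norm{A^{\top} v}$ with rescaled parameter $\norm{A^{\top} v}\,t$, apply the definition, and then use $\norm{A^{\top} v} \le \norm{A}$. Your separate treatment of the case $A^{\top} v = \zero$ is a small improvement, since the paper divides by $\norm{A^{\top} v}$ without checking that it is nonzero.
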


\begin{proof}
  Let $\mu := \E{x}$. Then for any $t \in \R$ and $v \in \mathbb{S}^{d-1}$ we have
  \begin{align*}
    \E{\exp\prn[\big]{ v^{\top} (Ax - \E{Ax}) t }}
    &= \E{\exp\prn[\big]{ (v^{\top} A(x - \mu) t }} \\
    &= \E{\exp\prn*{ \prn*{\tfrac{A^{\top} v}{\norm{A^{\top} v}} }^{\top} (x - \mu) \cdot (\norm{A^{\top} v} t) }} \\
    &\leq \exp\prn[\big]{ \sigma^2 (\norm{A^{\top} v} t)^2 / 2 } \\
    &\leq \exp\prn[\big]{ (\norm{A} \sigma)^2 t^2 / 2 },
  \end{align*}
  where in the last inequality we use the fact that $\norm{A^{\top} v} \leq \norm{A}$.
\end{proof}

\boldtitle{Concentration Bounds for Time Series.} The following auto-regressive concentration bounds are established in \cite{sarkar2019near}.

\begin{lemma}\label{thm:lemma-auto_concentration_XXT}
  Suppose $z_{i+1} = \As z_i + \eta_i$, where $\rho(\As) < 1$ is stable, and $\set{\eta_i}$ is i.i.d. sub-Gaussian with variance proxy $\sigma^2$. Then, with probability at least $1-\delta$, there exists some
  \begin{equation*}
    N_0 = \tilde{\mathrm{O}}\prn*{\sigma^2 d \log \tfrac{1}{\delta}},
  \end{equation*}
  such that for any $N > N_0$, we have
  \begin{equation*}
    \sum_{i=1}^{N} z_i z_i^{\top} \succeq \frac{\sigma^2 N}{4} I.
  \end{equation*}
\end{lemma}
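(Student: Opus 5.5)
The plan is to lower-bound the empirical Gram matrix $\sum_{i=1}^N z_i z_i^{\top}$ in three steps: build a matching lower bound on its expectation, show concentration in every fixed direction, and then take a union over directions with an $\varepsilon$-net.

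\textbf{Step 1 (expected Gram matrix).} Unrolling the recursion gives $z_i = \As^{i} z_0 + \sum_{k=0}^{i-1} \As^{i-1-k}\eta_k$. The $k=i-1$ term is $\eta_{i-1}$ itself, and it is independent of everything else in $z_i$. I will use this to bound the conditional covariance of $z_i$ given $\mathcal{F}_{i-2}$ from below by $\Var{\eta}$. I will interpret the unit-covariance model of the paper, with its variance-proxy normalization, as $\Var{\eta}\succeq c\sigma^2 I$; this identification is needed for the constant $\sigma^2/4$ to make sense. Summing then gives $\E{\sum_i z_iz_i^{\top}} \succeq N\Var{\eta}$.

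\textbf{Step 2 (per-direction concentration).} Fix a unit vector $v$. I will decompose
\[
\sum_i (v^{\top} z_i)^2 \ge \sum_i (v^{\top}\eta_{i-1})^2 + 2\sum_i (v^{\top}\eta_{i-1})(v^{\top}\As z_{i-1}).
\]
\emph{First term.} This is a sum of i.i.d.\ sub-exponential variables, so Bernstein's inequality gives a lower bound of $\tfrac{N}{2}\E{(v^{\top}\eta)^2}$ with failure probability $e^{-cN}$.

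\emph{Cross term.} This is a martingale. I will bound it with a scalar self-normalized inequality of the kind in \cite{abbasi2011improved}, which gives
\[
\Big|\sum_i (v^{\top}\eta_{i-1})(v^{\top}\As z_{i-1})\Big| \lesssim \sigma\sqrt{\Big(\sum_i (v^{\top}\As z_{i-1})^2+1\Big)\log(1/\delta)}.
\]
\emph{Closing the loop.} The quantity $\sum_i (v^{\top}\As z_{i-1})^2$ is at most $\|\As\|^2\sum_i\|z_i\|^2$. By stability (\Cref{assum:stability}, the bounded Gramian $\varGamma$), this is $O(N\sigma^2\operatorname{tr}\varGamma\log(1/\delta))$ with high probability, using a Hanson--Wright bound on the Gaussian-like quadratic form in $(\eta_0,\dots,\eta_{N-1})$. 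The cross term is therefore $O(\sqrt{N\log(1/\delta)})$, which is dominated by the $\Theta(N)$ first term once $N$ is large.

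\textbf{Step 3 (net argument).} I will take a $1/4$-net $\mathcal{N}$ of $\mathbb{S}^{d-1}$, with $|\mathcal{N}|\le 9^d$, and apply Step 2 at every point with $\delta/9^d$. This replaces $\log(1/\delta)$ by $d+\log(1/\delta)$, which is exactly where $N_0=\tilde{\mathrm O}(\sigma^2 d\log(1/\delta))$ comes from. To pass from the net to the whole sphere, I will use the standard inequality
\[
\lambda_{\min}(M)\ge \min_{v\in\mathcal{N}}v^{\top}Mv - 2\cdot\tfrac14\|M\|,
\]
together with the upper bound $\|M\|\lesssim N\sigma^2\operatorname{tr}\varGamma$ from Step 2. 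This has the drawback that the net resolution must scale with the conditioning $\operatorname{tr}\varGamma/\Var{\eta}$, which only enters through logarithms inside the $\tilde{\mathrm O}$. After bookkeeping the constants, the result is $\sum_i z_iz_i^{\top}\succeq \tfrac{\sigma^2N}{4}I$.

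\textbf{Main obstacle.} The hardest part is the self-normalized control of the cross term. Its normalization, $\sum_i (v^{\top}\As z_{i-1})^2$, is itself random and correlated with the martingale. I would first try the ``small-ball'' block martingale method of \cite{simchowitz2018learning}, since it avoids this coupling entirely by lower-bounding $\mathbb{P}\big((v^{\top}z_{i+k})^2\ge c\,v^{\top}\varGamma_k v\mid\mathcal{F}_i\big)$ directly. If that fails, I would fall back on Step 2 as written, which relies on the high-probability upper bound on $\sum_i\|z_i\|^2$.
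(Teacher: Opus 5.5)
Your caveat about the noise is correct, and it is necessary. A variance proxy only upper-bounds $\mathrm{Var}(\eta)$, so the lemma as literally stated is false: with $\As=0$ one has $\sum_i z_iz_i^{\top}\approx N\,\mathrm{Var}(\eta)$, which need not dominate $\tfrac{\sigma^2N}{4}I$. Under $\mathrm{Var}(\eta)\succeq c\sigma^2 I$, what you can honestly prove is a lower bound proportional to $c\sigma^2N$. This gives the stated $\tfrac{\sigma^2N}{4}$ only when $c$ is not too small (e.g.\ $\mathrm{Var}(\eta)=\sigma^2 I$).

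The paper gives no argument of its own; it defers to Section~9 of \cite{sarkar2019near}, and that argument works at the matrix level rather than direction by direction. With $Y=\sum_i z_iz_i^{\top}$ and $S=\sum_i z_i\eta_i^{\top}$,
\[
\sum_i z_{i+1}z_{i+1}^{\top}=\As Y\As^{\top}+\sum_i\eta_i\eta_i^{\top}+\As S+S^{\top}\As^{\top}.
\]
There $\sum_i\eta_i\eta_i^{\top}$ is lower-bounded by covariance concentration, and $S$ is controlled by the vector self-normalized inequality (the Proposition~8.2 also used in \Cref{thm:lemma-auto_concentration_upper}). The cross term is then absorbed into $\As Y\As^{\top}$ via $a^2-2ab\ge -b^2$. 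This treats all directions at once, and the upper bound on $Y$ enters only through $\log\det$, so the burn-in is linear in $d$ up to logarithms. Your scalar-plus-net route is more elementary, but as written it pays in two places.

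First, you discard $V_v=\sum_i (v^{\top}\As z_{i-1})^2$. This forces an a priori bound $V_v\lesssim N\sigma^2\|\As\|^2\,\mathrm{tr}\,\varGamma$, which then enters $N_0$ multiplicatively: you need $N\gtrsim \mathrm{tr}\,\varGamma\,(\log|\mathcal N|+\log\tfrac1\delta)$. That is of order $d^2$ when $\mathrm{tr}\,\varGamma\asymp d$, not the claimed $\tilde{\mathrm O}(d\log\tfrac1\delta)$. The fix is the scalar version of the absorption step. Keep $V_v$; set $S_v=\sum_i (v^{\top}\eta_{i-1})(v^{\top}\As z_{i-1})$, $\delta'=\delta/|\mathcal N|$ and $L_v=\log(\sqrt{V_v+1}/\delta')$. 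Then use $2|S_v|\le 2\sigma\sqrt{2(V_v+1)L_v}\le V_v+1+2\sigma^2L_v$, which gives
\[
\sum_i (v^{\top}z_i)^2\ \ge\ \sum_i (v^{\top}\eta_{i-1})^2-1-2\sigma^2L_v .
\]
Now $V_v$ appears only inside a logarithm, so a Markov bound on $\sum_i\|z_i\|^2$ suffices. This also removes your appeal to Hanson--Wright, whose standard form needs independent coordinates, which a sub-Gaussian vector need not have.

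Second, with a $1/4$-net the passage to the sphere certifies nothing. $\|M\|$ is at least about $N\lambda_{\max}(\mathrm{Var}(\eta))\ge Nc\sigma^2$, while your direction-wise bound is only $\tfrac12Nc\sigma^2$, so the net inequality gives a lower bound $\le 0$. You already note that the resolution must scale like $c\sigma^2N/\|M\|$. With that choice, again using Markov to bound $\|M\|$, one gets $\log|\mathcal N|\approx d\log(\mathrm{tr}\,\varGamma/(c\delta))$ rather than $d\log 9$, and this is a natural source of a $d\log\tfrac1\delta$ burn-in.

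With these two adjustments, your argument proves the corrected lemma.
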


\begin{proof}
  See Section 9 in \cite{sarkar2019near}.
\end{proof}

\begin{lemma}\label{thm:lemma-auto_concentration_upper}
  Suppose $z_{i+1} = \As z_i + \eta_i$, where $\rho(\As) < 1$ is stable, and $\set{\eta_i}$ is i.i.d. sub-Gaussian with variance proxy $\sigma^2$. Further, suppose the Gramian $\varGamma_t(A) := \sum_{\tau=0}^{t-1} A^{\tau} (A^{\tau})^{\top}$ is bounded by $\varGamma$. Then, with probability at least $1-\delta$, we have
  \begin{equation*}
    \norm*{ \prn*{\sum_{i=1}^{N} \eta_i z_i^{\top}} \prn*{\sum_{i=1}^{N} z_i z_i^{\top}}^{-1/2} }
    \lesssim \sqrt{d \log(d/\delta)}.
  \end{equation*}
\end{lemma}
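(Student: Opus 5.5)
The statement is the self-normalized bound of \Cref{thm:lemma-auto_concentration_upper}. I would prove it by reducing it to the self-normalized martingale inequality of \cite{abbasi2011improved}, plus a regularization argument that removes the ridge term.

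\textbf{Filtration and martingale structure.} Let $\F_i := \sigma(z_1, \eta_1, \ldots, \eta_{i-1})$. Then $z_i$ is $\F_i$-measurable, while $\eta_i$ is independent of $\F_i$, mean zero, and sub-Gaussian with variance proxy $\sigma^2$. Hence $S_N := \sum_{i=1}^{N} z_i \eta_i^{\top}$ is a matrix-valued martingale. Write $V_N := \sum_{i=1}^{N} z_i z_i^{\top}$. The quantity to bound equals $\norm{V_N^{-1/2} S_N}$, since transposition does not change the operator norm.

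\textbf{Per-direction bound.} Fix a unit vector $u$. The scalar noise $u^{\top}\eta_i$ is conditionally $\sigma^2$-sub-Gaussian, so Theorem~1 of \cite{abbasi2011improved} applies with a regularizer $V \succ \zero$. It gives, with probability $1-\delta$,
\[
\norm{(V+V_N)^{-1/2} S_N u}^2 \le 2\sigma^2 \log\Bigl( \det(V+V_N)^{1/2} \det(V)^{-1/2} / \delta \Bigr).
\]

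\textbf{Covering.} To pass from a fixed $u$ to the operator norm, I would take a $1/2$-net of the sphere of size $5^d$ and apply a union bound. This replaces $\log(1/\delta)$ by $d\log 5 + \log(1/\delta)$ and costs only a factor $2$ on the norm.

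\textbf{Choice of regularizer.} Take $V = \underline{\lambda} I$ with $\underline{\lambda} = \sigma^2 N/4$. By \Cref{thm:lemma-auto_concentration_XXT}, for $N > N_0$ we have $V_N \succeq V$ with probability $1-\delta$. Then $V + V_N \preceq 2V_N$, which gives
\[
\norm{V_N^{-1/2} S_N} \le \sqrt{2}\, \norm{(V+V_N)^{-1/2} S_N}.
\]
For the determinant term, I would bound $V_N$ from above. By Markov's inequality, $\E{V_N} \preceq N \sigma^2 \varGamma \cdot \mathrm{const}$ (up to the contribution of $z_1$), using the Gramian bound $\varGamma_t \preceq \varGamma$. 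This gives, with probability $1-\delta$, $\norm{V_N} \lesssim N\sigma^2 \operatorname{tr}(\varGamma)/\delta$. The log-determinant ratio is therefore at most $d \log\bigl(1 + 4\norm{\varGamma} d/\delta\bigr) \lesssim d\log(d/\delta)$, with the $\varGamma$-dependence absorbed into instance constants. Combining the three events by a union bound (rescaling $\delta$) yields the bound $\sqrt{d\log(d/\delta)}$, with $\sigma$ hidden in the constant.

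\textbf{Main obstacle.} The delicate step is controlling $\log\det(V+V_N)/\det V$ uniformly in $N$. The $N$ factors cancel only because $\underline{\lambda}$ is chosen proportional to $N$, which requires the lower-bound event of \Cref{thm:lemma-auto_concentration_XXT} to hold. If the Markov-based upper bound turns out too loose, I would instead use a Hanson--Wright-type concentration for $\sum_i z_i z_i^{\top}$, as in Section~9 of \cite{sarkar2019near}.
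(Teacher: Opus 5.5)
Your proposal is correct and follows essentially the same route as the paper. Both arguments combine the self-normalized martingale bound with a $5^d$-net (exactly what Proposition 8.2 of \cite{sarkar2019near} packages), regularize at the level $\tfrac{\sigma^2 N}{4}$ supplied by \Cref{thm:lemma-auto_concentration_XXT} so that $\|V_N^{-1/2}S_N\| \le \sqrt{2}\,\|(V+V_N)^{-1/2}S_N\|$, and bound $\|V_N\|$ from above to control the log-determinant ratio. The differences are minor: you use Markov's inequality where the paper invokes Proposition 8.4 of \cite{sarkar2019near}, which is harmless because the $1/\delta$ only enters inside a logarithm; and you use the deterministic regularizer $\tfrac{\sigma^2 N}{4} I$ on the event $V_N \succeq \tfrac{\sigma^2 N}{4} I$ instead of the paper's data-dependent $\lambda = \sigma_{\min}(Y)$, which is the more rigorous choice since the self-normalized inequality requires a fixed regularizer.
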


\begin{proof}
  For simplicity, let $S := \sum_{i=1}^{N} \eta_i z_i^{\top}$ and $Y := \sum_{i=1}^{N} z_i z_i^{\top}$. Define $\lambda := \sigma_{\min}(Y)$, such that $\bar{Y} := Y + \lambda I \preceq 2Y$. Then by Proposition 8.2 in \cite{sarkar2019near} we have
  \begin{equation*}
    \norm[\big]{SY^{-1/2}}
    \leq \sqrt{2} \norm[\big]{S \bar{Y}^{-1/2}}
    \leq 4\sigma \sqrt{d \log\prn*{\frac{5 \prn*{\frac{\det(\bar{Y})}{\det(\lambda I)}}^{1/2d}}{\delta}}}.
  \end{equation*}
  Meanwhile, by Proposition 8.4 in \cite{sarkar2019near} we also have
  \begin{equation*}
    \norm{Y} \leq \tr\prn*{\sum_{i=1}^{N} \varGamma_t(A)} \prn*{1 + \frac{1}{C} \log \frac{1}{\delta}}
    \leq dN\varGamma \prn*{1 + \frac{1}{C_1} \log \frac{1}{\delta^{1/d}}}
  \end{equation*}
  for some absolute constant $C_1$. Consequently, by \Cref{thm:lemma-auto_concentration_XXT},
  \begin{equation*}
    \frac{\det(\bar{Y})}{\det(\lambda I)}
    \leq 2 \prn*{\frac{\norm{Y}}{\lambda}}^d
    \leq 2 \prn*{\frac{4d\varGamma \prn*{1 + \frac{1}{C_1} \log \frac{1}{\delta}}}{\sigma^2}}^d,
  \end{equation*}
  and thus
  \begin{equation*}
    \norm[\big]{SY^{-1/2}}
    \leq 4\sigma \sqrt{\frac{d}{2} \log\prn*{ \frac{Cd \log (1/\delta)}{\delta}} }
    \asymp \sqrt{d \log(d / \delta)}.
  \end{equation*}
  for some absolute constant $C$.
\end{proof}